\begin{document}
\title{Impact of Correlation between Interferers on Coverage Probability and rate in Cellular Systems}
\author{Suman Kumar \hspace*{0.2in} Sheetal Kalyani  \\
\hspace{0in} Dept. of Electrical Engineering \\
  \hspace{-0in}   IIT Madras,  Chennai 600036, India   \\ 
{\tt \{ee10d040,skalyani\}@ee.iitm.ac.in}\\
}

\maketitle
\begin{abstract} 
When the user channel experiences Nakagami-m fading, the coverage probability expressions are theoretically compared for the following cases: $(i).$ The $N$ interferers are independent $\eta$-$\mu$ random variables (RVs). $(ii).$ The  $N$ interferers are correlated $\eta$-$\mu$ RVs. It is  analytically shown that  the coverage probability in the presence of correlated interferers is greater than or equal to the coverage probability in the presence of  independent interferers  when the  shape parameter of the channel between the user and its base station (BS)  is not greater than one. Further, rate is compared for the following cases: $(i).$  The user channel experiences $\eta$-$\mu$ RV and the $N$ interferers are independent $\eta$-$\mu$ RVs. $(ii).$ The  $N$ interferers are correlated $\eta$-$\mu$ RVs. It is   analytically shown that the rate in the presence of correlated interferers is greater than  or equal to the rate in the presence of independent interferers.  Simulation results are provided  and these match with the obtained theoretical results. The utility of our results are also discussed. 
\end{abstract}
\begin{keywords}
Majorization theory, Stochastic ordering, Gamma random variables, Correlation, Coverage probability, rate.
\end{keywords}
\section{Introduction}

Typically, in practical scenarios correlation exists among the interferers, as evidenced by experimental results reported in  \cite{correlation,1105, 82767, 104090, 490700, 944855}.    For example, in cellular networks when two base stations (BSs) from adjacent sectors act as interferers, the interferers are correlated and it is mandated that while performing system level simulation, this correlation be explicitly introduced in the system \cite{3gpp}. We refer the  reader to  \cite{5590312}   for a structured synthesis of the existing literature on correlation among large scale fading.  Considering the impact of correlation in the large scale  shadowing component and the small scale  multipath component is also an essential step towards modeling the channel.  The decorrelation distance in multipath components is lower when compared to shadowing components since shadowing is related to terrain configuration and/or large obstacles between transmitter and receiver \cite{991146}. Having said this, there is a need to analyse the performance of cellular system in the presence of correlation among interferers.

Coverage probability\footnote{It is a probability that a user can achieve a target Signal-to-Interference-plus-noise-Ratio (SINR) $T$, and outage probability is the complement of  coverage probability.} and rate are  important metric  for performance evaluation of cellular systems. Coverage probability in the presence of interferers has been derived in  \cite{6171806, paris2013outage, 6661325, 6502746, 6957529, 7130668} and references there in, and rate in the presence of interferers has been studied in \cite{6502746, 6957529, 7130668}  and references therein for the case of $\eta$-$\mu$ fading. Moreover, the correlation among interferers is assumed in \cite{7130668}. Fractional frequency reuse and soft frequency reuse have been compared in the presence of correlation among interferers \cite{7015633}. The impact of correlation among interferers on symbol error rate performance has been analysed in \cite{7173002}. However, to the best of our knowledge, no prior work in open literature has  analytically compared the coverage probability and rate for generalized fading when interferers are independent with the coverage probability and rate when interferers are correlated. In this paper, we compare coverage probability and rate using majorization theory and stochastic ordering theory, respectively. Majorization theory is an important theory for comparison of two vectors in terms of the dispersion of their components. It  has  been extensively used in various problems in information theory \cite{259659,992785, 1278662, 1056173, 1661855, 6409457}. In particular, using the results of majorization theory, a new analysis on Tunstall algorithm is provided in \cite{1661855} and the bounds of  the average length of the Huffman source code in the presence of limited knowledge of the source symbol probability distribution is provided in \cite{1278662}. Our analysis of comparison of coverage probability shows further evidence of the relevance of majorization theory. Stochastic ordering theory is used extensively for comparison of  random variables (RVs). Recently, it has been used in comparing various metrics in wireless communication \cite{6042309, 6423920,7031955, 6662523}. Our analysis of comparison of rate shows another evidence of the relevance of stochastic order theory in wireless communication.

In this work,  we compare the  coverage probability when user experience Nakagami-m fading and interferers experience $\eta$-$\mu$ fading. In other words, we compare the coverage probability when the interferers are independent   with the coverage probability when the interferers are positively correlated\footnote{If $cov(X_i, X_j)\geq 0$ then $X_i$ and $X_j$  are positively correlated RVs, where $cov(X_i, X_j)$ denotes the covariance between $X_i$ and $X_j$ \cite{403769}.} using majorization theory.   It is analytically shown that the coverage  probability in presence of correlated interferers is higher than the coverage probability when the interferers  are independent, when the user channel's shape parameter is lesser than or equal to one.   We also show that when the user channel's shape parameter is greater than one, one cannot say whether coverage 
probability is higher or lower for the correlated case when compared to the independent case, and in some cases coverage probability is higher while in other cases it is lower.

We then analytically compare the rate when the interferers are independent  with the rate when the interferers are correlated using  stochastic ordering theory.   It is shown that the rate in the presence of positively correlated interferers is higher than  the rate in the presence of independent interferers when both user channel and interferers experience $\eta$-$\mu$ fading. Our results show that correlation among interferers is beneficial for the desired user. We briefly discuss how the desired user can exploit this correlation among the interferers to improve its rate. Multi-user multiple input multiple output (MU-MIMO) system is also considered and it is shown that the impact of correlation is significant on MU-MIMO system. We have also carried out extensive simulations for both the independent interferers case and the correlated interferers case and some of these results are reported in the Simulation section.
In all the cases, the simulation results match with our theoretical results.

\section{System model}
We consider a homogeneous macrocell network with hexagonal structure with radius $R$ as shown in Fig. \ref{fig:hexagonal}. The Signal-to-Interference-Ratio  (SIR)  of a user located at $r$ meters from the BS is given by
\begin{equation}
\text{SIR}=\eta(r)= \frac{Pgr^{-\alpha}}{\sum\limits_{i\in\psi}Ph_id_i^{-\alpha}}= \frac{gr^{-\alpha}}{\sum\limits_{i\in\psi}h_id_i^{-\alpha}}=\frac{S}{I} \label{sir}
\end{equation}
where $\psi$ denotes the set of interfering BSs and  $N=|\phi|$ denotes the cardinality of the set $\phi$. The transmit power of a BS is denoted by $P$.  A standard path loss model $r^{-\alpha}$ is considered, where $\alpha\geq 2$ is the path loss exponent.  The distance between user to tagged BS (own BS) and the $i$th interfering BS is denoted by $r$ and $d_i$, respectively. The user channel's power  and the channel power between $i^{th}$ interfering BS and user are $\eta$-$\mu$ power RVs. The probability density function (pdf) $f_{g_{{\eta-\mu}}(x)}$ of the $\eta$-$\mu$ power RV $g$ is given by \cite[Eq. (26)]{4231253},
\begin{equation}
f_{g_{{\eta,\mu}}(x)}=\frac{2\sqrt{\pi}\mu^{\mu+ \frac{1}{2}}h^{\mu} x^{\mu- \frac{1}{2}}}{\Gamma(\mu)H^{\mu- \frac{1}{2}}}e^{-2\mu h x}I_{\mu- \frac{1}{2}}(2\mu H x)\label{eta_mu2}
\end{equation}
where $\mu$ is shape parameter. Parameters $H$ and $h$  are given by
\begin{equation}
H=\frac{\eta^{-1}-\eta}{4}, \text{ and  } h=\frac{2+\eta^{-1}+\eta}{4}.
\end{equation}
where $0<\eta< \infty $ is the power ratio of the in-phase and quadrature 	component of the fading signal in each multipath cluster. 
The parameters of pdf of $h_i$ are $\eta_i$ and $\mu_i$ corresponding to $\eta$-$\mu$ power RV. The gamma RV is a special case of $\eta$-$\mu$  power RV with $\eta=1$ and $\mu=\frac{m}{2}$ and the pdf $f_g(x)$  of the gamma RV $g$  is given by
\begin{equation}
f_g(x)=m^m e^{-mx}\frac{x^{m-1}}{\Gamma(m)}\label{gamma}
\end{equation}
where $m$ and $\frac{1}{m}$ are the shape parameter and scale  parameter, respectively, and $\Gamma(.)$ denotes the gamma function. When one assumes that there is correlation among interferers then $h_i$ and $h_j$ are correlated $\forall\text{ } i \text{ and }j$.
\begin{figure}[ht]
\centering
\begin{tikzpicture}
\node[pattern=north west lines, regular polygon, regular polygon sides=6,minimum width= 1  cm, draw] at (5,5) {};
\node at (5,5){\scriptsize $0$};
\node[regular polygon, regular polygon sides=6,minimum width=1 cm, draw] at (5+1.5*0.5,5-0.866*0.5) {};
\node at (5+1.5*0.5,5-0.866*0.5){\scriptsize $6$};
\node[ regular polygon, regular polygon sides=6,minimum width=1 cm, draw] at (5+1.5*0.5,5+0.866*0.5) {};
\node at (5+1.5*0.5,5+0.866*0.5){\scriptsize $1$};
\node[ regular polygon, regular polygon sides=6,minimum width=1 cm, draw] at (5-1.5*0.5,5+0.866*0.5) {};
\node at (5-1.5*0.5,5+0.866*0.5){\scriptsize $3$};
\node[ regular polygon, regular polygon sides=6,minimum width=1 cm, draw] at (5-1.5*0.5,5-0.866*0.5) {};
\node at (5-1.5*0.5,5-0.866*0.5){\scriptsize $4$};
\node[ regular polygon, regular polygon sides=6,minimum width=1 cm, draw] at (5,5-0.866) {};
\node at (5,5-0.866){\scriptsize $5$};
\node[ regular polygon, regular polygon sides=6,minimum width=1 cm, draw] at (5,5+0.866) {};
\node at (5,5+0.866){\scriptsize $2$};
\node[ regular polygon, regular polygon sides=6,minimum width=1 cm, draw] at (5,5+0.866*2) {};
\node at (5,5+0.866*2){\scriptsize $9$};
\node[ regular polygon, regular polygon sides=6,minimum width=1 cm, draw] at (5,5-0.866*2) {};
\node at (5,5-0.866*2){\scriptsize $15$};
\node[ regular polygon, regular polygon sides=6,minimum width=1 cm, draw] at (5+3*0.5,5+0.866*1) {};
\node at (5+3*0.5,5+0.866*1){\scriptsize $7$};
\node[ regular polygon, regular polygon sides=6,minimum width=1 cm, draw] at (5+3*0.5,5-0.866*1) {};
\node at (5+3*0.5,5-0.866*1){\scriptsize $17$};
\node[ regular polygon, regular polygon sides=6,minimum width=1 cm, draw] at (5-3*0.5,5-0.866*1) {};
\node at (5-3*0.5,5-0.866*1){\scriptsize $13$};
\node[ regular polygon, regular polygon sides=6,minimum width=1 cm, draw] at (5-3*0.5,5+0.866*1) {};
\node at (5-3*0.5,5+0.866*1){\scriptsize $11$};
\node[ regular polygon, regular polygon sides=6,minimum width=1 cm, draw] at (5+1.5,5) {};
\node at (5+1.5,5){\scriptsize $18$};
\node[ regular polygon, regular polygon sides=6,minimum width=1 cm, draw] at (5-1.5,5) {};
\node at (5-1.5,5){\scriptsize $12$};
\node[ regular polygon, regular polygon sides=6,minimum width=1 cm, draw] at (5+1.5*0.5,5-0.866*1.5) {};
\node at(5+1.5*0.5,5-0.866*1.5){\scriptsize $16$};
\node[ regular polygon, regular polygon sides=6,minimum width=1 cm, draw] at (5-1.5*0.5,5+0.866*1.5) {};
\node at(5-1.5*0.5,5+0.866*1.5){\scriptsize $10$};
\node[ regular polygon, regular polygon sides=6,minimum width=1 cm, draw] at (5-1.5*0.5,5-0.866*1.5) {};
\node at(5-1.5*0.5,5-0.866*1.5){\scriptsize $14$};
\node[ regular polygon, regular polygon sides=6,minimum width=1 cm, draw] at (5+1.5*0.5,5+0.866*1.5) {};
\node at(5+1.5*0.5,5+0.866*1.5){\scriptsize $8$};
\draw[<->] (5,5)--(5+1.5*0.5,5+0.866*0.5) node[pos=0.75,sloped,above] {\tiny$2R$};
\end{tikzpicture}
            \caption{ Macrocell network with hexagonal tessellation having inter cell site distance $2R$}
           
             \label{fig:hexagonal}
        \end{figure}
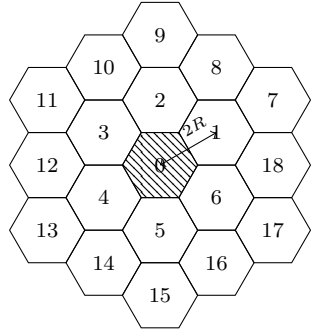

The coverage probability expression when signal of interest (SoI) experience Nakagami-m fading and interferers experience independent $\eta$-$\mu$ fading with equal $\mu$, i.e.,  $\mu_i=\mu_c \forall i$ is given by \cite{7130668}
\begin{equation*}
C_{p,\eta}(r)=\frac{\Gamma(2N\mu_c+m)}{\Gamma\left(2N\mu_c+1\right)}\frac{1}{\Gamma{(m)}} \prod\limits_{i=1}^{2N}\left(\frac{1}{Tr^{\alpha}m \lambda_i +1}\right)^{\mu_c} \times
\end{equation*}
\begin{equation}
 F_D^{(2N)}\left[1-m,\mu_c,\cdots,\mu_c; 2N\mu_c+1;\frac{1}{Tr^{\alpha}  \lambda_1m+1},\cdots,\frac{1}{Tr^{\alpha}\lambda_{2N} m+1}\right] \label{eq:cov_eta_mu_system} 
\end{equation}
\begin{equation}
\text{ where } \lambda_{2i-1}=\frac{d_i^{-\alpha}}{\mu_c(1+\eta_i^{-1})} \text{ and } \lambda_{2i}=\frac{d_i^{-\alpha}}{\mu_c(1+\eta_i)}. \label{lambda_system}
\end{equation}
Here $F_D^{(N)}\left[a,b_1, \cdots, b_N;c;x_1,\cdots, x_N\right]$ is the  Lauricella's function of the fourth kind \cite{exton1976multiple}. 
The coverage probability expression when SoI experience Nakagami-m fading and interferers experience correlated $\eta$-$\mu$ fading with equal $\mu$, i.e.,  $\mu_i=\mu_c \forall i$ is given by \cite{7130668}
\begin{equation*}
\hat{C}_{p,\eta}(r)=\frac{\Gamma(2N\mu_c+m)}{\Gamma\left(2N\mu_c+1\right)}\frac{1}{\Gamma{(m)}} \prod\limits_{i=1}^{2N}\left(\frac{1}{Tr^{\alpha}m \hat{\lambda}_i +1}\right)^{\mu_c} \times
\end{equation*}
\begin{equation}
 F_D^{(2N)}\left[1-m,\mu_c,\cdots,\mu_c; 2N\mu_c+1;\frac{1}{Tr^{\alpha}  \hat{\lambda}_1m+1},\cdots,\frac{1}{Tr^{\alpha}\hat{\lambda}_{2N} m+1}\right] \label{eq:cov_eta_mu_corr_system} 
\end{equation}
Here $\hat{\lambda}_i$s are the eigenvalues of  $\mathbf{A_{\eta}=D_{\eta}C_{\eta}}$. $\mathbf{D}_{\eta}$ is the diagonal matrix with entries $\lambda_i$ given in \eqref{lambda_system} and $\mathbf{C}_{\eta}$ is the s.p.d. $2N\times 2N$ matrix as given by
\begin{equation}
\mathbf{C}_{\eta}=\left[ \begin{array}{ccccc}
 1 & 0& \sqrt{\rho_{13}} &...&0   \\ 0& 1 &0&...&\sqrt{\rho_{22N}} \\ \cdots &  \cdots &\cdots &\ddots & \cdots \\ 0&\sqrt{\rho_{2N2}} &\cdots &\cdots &1  \end{array} \right],\label{correlation1_system}
\end{equation}
here $\rho_{ij}=0$, when $i+j=2n+1$, where $n$ is an integer. 

Now, we want to compare \eqref{eq:cov_eta_mu_system} with   \eqref{eq:cov_eta_mu_corr_system}. However both involve $N$- fold infinite series making any comparison fairly difficult.  In order to elegantly compare these two expressions we make use of Majorization theory.

\section{Preliminaries: Majorization and  Stochastic Ordering Theory}
In this section, we recall the basic notions of majorization and  stochastic ordering theory, which are applicable to our context.  Note that the majorization theory is used to compare deterministic vectors, whereas stochastic order is applicable on RVs \cite{6042309}. We refer the reader to  \cite{marshall2011inequalities} and \cite{shaked2007stochastic} as excellent references for the majorization theory and stochastic order theory, respectively.
\subsection{Majorization theory}
\newtheorem{definition}{Definition}
\begin{definition}
\label{Hadamard}
\textbf{Let $A$ and $B$ be $m\times n$ matrices with entries in $C$. The Hadamard product of $A$ and $B$ is defined by $[A \circ B]_{ij}=[A]_{ij}[B]_{ij}$ for all $1\leq i\leq m$, $1\leq i\leq n$.}
\end{definition}

\begin{definition}
Let vector $\boldsymbol{a}=[a_1,\cdots a_n]$ and vector $\boldsymbol{b}=[b_1,\cdots b_n]$ with  $a_1\leq, \cdots,\leq  a_n$ and  $b_1\leq, \cdots,\leq  b_n$ then vector $\boldsymbol{b}$ is majorized by vector $\boldsymbol{a}$, denoted by $\boldsymbol{a}\succ \boldsymbol{b}$, if and only if 
\begin{equation}
\sum\limits_{i=1}^{k}b_i\geq \sum\limits_{i=1}^{k}a_i, \text{ } k=1,\cdots, n-1, \text{ and } \sum\limits_{i=1}^{n}b_i= \sum\limits_{i=1}^{n}a_i. 
\end{equation}
\end{definition}
\newtheorem{lemma}{Lemma} 
\newtheorem{theorem}{Theorem}
We will briefly state few well known results from  majorization theory which we use in our analysis.
\begin{theorem}
\label{majorization_theory1}
\textbf{ Let $A$ and B be positive semidefinite matrices of size $n$. Let $\lambda_1,\cdots, \lambda_n $ be the eigenvalues of $A \circ B$ and let $\hat{\lambda}_1, \cdots, \hat{\lambda}_n$ be the eigenvalues of $AB$. Then
\begin{equation}
\prod\limits_{i=k}^{n}\lambda_i \geq \prod\limits_{i=k}^{n} \hat{\lambda}_i, k=1,2,\cdots,n
\end{equation}}

\end{theorem}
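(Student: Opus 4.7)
The statement is a weak bottom log-majorization of $\operatorname{spec}(AB)$ by $\operatorname{spec}(A\circ B)$: arranging both spectra in decreasing order so that $\lambda_k,\dots,\lambda_n$ are the smallest $n-k+1$ eigenvalues, the product of the smallest $n-k+1$ eigenvalues of $A\circ B$ is asserted to dominate the same product for $AB$. My plan is to bootstrap from the $k=1$ case.

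For $k=1$ the claim is $\det(A\circ B)\ge\det(AB)$, which follows from the classical Oppenheim--Hadamard chain: Oppenheim's inequality gives $\det(A\circ B)\ge\det(A)\prod_{i=1}^{n}b_{ii}$, and Hadamard's inequality for PSD matrices gives $\prod_{i=1}^{n}b_{ii}\ge\det(B)$, so chaining the two yields $\det(A\circ B)\ge\det(A)\det(B)=\det(AB)$. I would first reproduce this chain, since the rest of the argument reduces to it.

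For $k>1$, observe that the claim is equivalent, after dividing through by the $k=1$ inequality, to the companion \emph{top} log-majorization $\prod_{i=1}^{k-1}\lambda_{i}^{\downarrow}(A\circ B)\le\prod_{i=1}^{k-1}\lambda_{i}^{\downarrow}(AB)$; indeed, writing $\prod_{i=k}^{n}\lambda_i^{\downarrow}(A\circ B) = \det(A\circ B)/\prod_{i=1}^{k-1}\lambda_i^{\downarrow}(A\circ B)$ and similarly for $AB$, the bottom statement follows by combining $\det(A\circ B)\ge\det(AB)$ with the top bound. To prove the top log-majorization, I would exploit the Schur embedding $A\circ B=P^{T}(A\otimes B)P$, where $P\in\mathbb{R}^{n^{2}\times n}$ selects the diagonal indices $\{(i,i)\}_{i=1}^{n}$, together with the Horn--Weyl bound $\prod_{i=1}^{k-1}\lambda_{i}^{\downarrow}(AB)\le\prod_{i=1}^{k-1}\alpha_{i}^{\downarrow}\beta_{i}^{\downarrow}$ for products of PSD matrices, using that $\operatorname{spec}(A\otimes B)=\{\alpha_{p}\beta_{q}\}_{p,q=1}^{n}$.

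The main obstacle, and what I expect to be the hardest step, is that a naive Poincar\'e separation bound $\lambda_{i}^{\downarrow}(A\circ B)\le\lambda_{i}^{\downarrow}(A\otimes B)$ is too weak: the top $k-1$ eigenvalues of $A\otimes B$ can be much larger than the matched-pair product $\prod\alpha_{i}^{\downarrow}\beta_{i}^{\downarrow}$ from Horn's bound, because they may include repeated factors of the largest $\alpha$ or $\beta$. Closing this gap requires Ando's doubling/unitary-dilation construction, which exploits the specific diagonal structure of the embedding rather than treating $A\circ B$ as an arbitrary principal submatrix of $A\otimes B$; this delicate step is the technical heart of the argument.
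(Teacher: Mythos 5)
The paper never actually proves this theorem: it is listed among the ``well known results from majorization theory'' and used as a black box (it is the Bapat--Sunder weak multiplicative majorization of the spectrum of $AB$ by that of $A\circ B$; see Bapat and Sunder, \emph{Linear Algebra Appl.}~72 (1985), or Horn and Johnson, \emph{Topics in Matrix Analysis}, Ch.~5). So your argument has to stand on its own, and it does not. The $k=1$ step is fine: Oppenheim plus Hadamard does give $\det(A\circ B)\ge\det(A)\prod_i b_{ii}\ge\det(A)\det(B)=\det(AB)$. The fatal flaw is the reduction of the case $k>1$ to the ``companion top log-majorization'' $\prod_{i=1}^{k-1}\lambda_i^{\downarrow}(A\circ B)\le\prod_{i=1}^{k-1}\lambda_i^{\downarrow}(AB)$: that statement is false. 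Take
\begin{equation*}
A=\begin{pmatrix}1&1\\1&1\end{pmatrix},\qquad B=\begin{pmatrix}1&-1\\-1&1\end{pmatrix}.
\end{equation*}
Both are positive semidefinite, $A\circ B=B$ has largest eigenvalue $2$, yet $AB=0$, so $\lambda_1^{\downarrow}(A\circ B)=2>0=\lambda_1^{\downarrow}(AB)$. This is not a rank-degeneracy artifact: replacing the off-diagonal entries by $\pm(1-\epsilon)$ makes $A$ and $B$ positive definite, and a direct computation gives $AB=\epsilon(2-\epsilon)I$ while $\lambda_1^{\downarrow}(A\circ B)=1+(1-\epsilon)^2$, so the ``top'' inequality fails by an arbitrarily large factor even though the theorem itself (which for $n=2$ only involves $\det$ and $\lambda_{\min}$) continues to hold in these examples.

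Because the intermediate target is false, the difficulty you flag at the end is not a technical gap that an Ando-type dilation can close; the whole route is unavailable. Note also that even if the top inequality were true, the tools you name point the wrong way: the Horn--Weyl bound $\prod_{i=1}^{k-1}\lambda_i^{\downarrow}(AB)\le\prod_{i=1}^{k-1}\alpha_i^{\downarrow}\beta_i^{\downarrow}$ is an \emph{upper} bound on the $AB$ side, and the embedding $A\circ B=P^{T}(A\otimes B)P$ only yields \emph{upper} bounds on the $A\circ B$ side; two quantities dominated by the same third quantity need not be comparable. A correct proof must attack the bottom products $\prod_{i=k}^{n}$ directly rather than via their complementary top products; the known arguments (Bapat--Sunder; Ando's survey on majorization; Horn--Johnson) do exactly this and are genuinely nontrivial, which is presumably why the paper simply cites the result. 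For your purposes the honest options are either to reproduce one of those proofs or to cite the theorem as the paper does.
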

\newtheorem{proposition}{Proposition}
\begin{proposition}
\label{majorization_thory1}
If function $\phi $ is symmetric and convex, then $\phi$ is Schur-convex function. Consequently, $\boldsymbol{x} \succ \boldsymbol {y}$ implies $\phi(\boldsymbol{x})\geq \phi(\boldsymbol{y})$.
\end{proposition}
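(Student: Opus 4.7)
The plan is to establish Proposition 1 via the classical Hardy--Littlewood--P\'olya characterization of majorization, which states that $\boldsymbol{x} \succ \boldsymbol{y}$ holds if and only if there exists a doubly stochastic matrix $D$ such that $\boldsymbol{y} = D\boldsymbol{x}$. This is the bridge between the combinatorial definition of majorization given above and the functional inequality we want. I would cite \cite{marshall2011inequalities} for this result rather than reproducing its proof.

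Next, I would invoke Birkhoff's theorem, which asserts that the set of $n \times n$ doubly stochastic matrices is the convex hull of the $n!$ permutation matrices. Hence the doubly stochastic $D$ above admits a representation $D = \sum_k \alpha_k P_k$, where $\alpha_k \geq 0$, $\sum_k \alpha_k = 1$, and each $P_k$ is a permutation matrix. Consequently,
\begin{equation*}
\boldsymbol{y} \;=\; \sum_k \alpha_k \, P_k \boldsymbol{x}.
\end{equation*}

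The remainder is a short two-line calculation combining the hypotheses on $\phi$. Applying convexity of $\phi$ to the convex combination above gives
\begin{equation*}
\phi(\boldsymbol{y}) \;=\; \phi\!\left(\sum_k \alpha_k P_k \boldsymbol{x}\right) \;\leq\; \sum_k \alpha_k \, \phi(P_k \boldsymbol{x}),
\end{equation*}
and then symmetry of $\phi$ (invariance under coordinate permutations) yields $\phi(P_k \boldsymbol{x}) = \phi(\boldsymbol{x})$ for each $k$, so that $\phi(\boldsymbol{y}) \leq \phi(\boldsymbol{x}) \sum_k \alpha_k = \phi(\boldsymbol{x})$. Since this is precisely the defining inequality of a Schur-convex function, both conclusions of the proposition follow at once.

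The main conceptual obstacle is really just correctly invoking the Hardy--Littlewood--P\'olya theorem and Birkhoff's theorem in the right order; once these two structural facts are in hand, the proof collapses to one application of convexity and one application of symmetry. An alternative route would go through T-transforms (``Robin Hood'' transfers) to show that $\boldsymbol{y}$ is reachable from $\boldsymbol{x}$ by a finite sequence of pairwise averaging operations, and then verify that $\phi$ is non-increasing under each such operation using symmetry and convexity in the affected two coordinates; I would prefer the doubly-stochastic-matrix route because it is cleaner and directly connects to the matrix-analytic setting (cf.\ Theorem~1) already introduced in the paper.
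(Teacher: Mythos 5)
Your proof is correct: the paper itself gives no argument for this proposition, deferring entirely to \cite[p.~97, C.2]{marshall2011inequalities}, and the Hardy--Littlewood--P\'olya plus Birkhoff route you lay out is precisely the classical proof found in that reference, so you are in effect supplying the omitted details rather than taking a different path. The only point worth making explicit is that the convexity step requires $\phi$ to be defined on a permutation-invariant convex domain containing the $P_k\boldsymbol{x}$, which is implicit in the statement.
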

\begin{proof}
For the details of this proof please refer to \cite[P. 97, C.2.]{marshall2011inequalities}.
\end{proof}
\begin{theorem}
\label{schur_convex}
If $\phi_i$ is Schur-convex, $i=1,\cdots k$, and $\phi_i(x)\geq 0$ for all $i$ and $x$, then 
\begin{equation}
\psi(x)=\prod\limits_{i=1}^{k}\phi_i(x)
\end{equation}
is Schur-convex.
\end{theorem}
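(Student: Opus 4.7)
The plan is to verify Schur-convexity of $\psi$ directly from the definition, using only the monotonicity inherent in the hypothesis and the non-negativity assumption; no deeper results from majorization theory are needed here.

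First, I would pick an arbitrary pair of vectors with $\boldsymbol{x} \succ \boldsymbol{y}$ and aim to show $\psi(\boldsymbol{x}) \geq \psi(\boldsymbol{y})$. By the Schur-convexity of each $\phi_i$ I obtain the $k$ inequalities $\phi_i(\boldsymbol{x}) \geq \phi_i(\boldsymbol{y})$ for $i = 1,\dots,k$, and the standing hypothesis $\phi_i \geq 0$ guarantees that all quantities involved are non-negative.

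Next, I would multiply these $k$ inequalities termwise. The elementary lemma underlying this step is that if $0 \leq a_i \leq b_i$ for $i=1,\dots,k$, then $\prod_{i=1}^k a_i \leq \prod_{i=1}^k b_i$; this is a one-line induction on $k$, since
\begin{equation*}
b_1\cdots b_{k-1} b_k \;\geq\; a_1\cdots a_{k-1} b_k \;\geq\; a_1\cdots a_{k-1} a_k,
\end{equation*}
where the first inequality uses $b_k \geq 0$ and the second uses $a_1\cdots a_{k-1} \geq 0$. Applying this to $a_i = \phi_i(\boldsymbol{y})$ and $b_i = \phi_i(\boldsymbol{x})$ yields $\psi(\boldsymbol{x}) = \prod_i \phi_i(\boldsymbol{x}) \geq \prod_i \phi_i(\boldsymbol{y}) = \psi(\boldsymbol{y})$, which is exactly the Schur-convexity of $\psi$.

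There is no real obstacle in this proof; the only conceptual point worth flagging is that the non-negativity hypothesis cannot be dropped. If some $\phi_i$ were allowed to take negative values, then the termwise multiplication would no longer preserve the direction of the inequality, and the product of Schur-convex functions would in general fail to be Schur-convex. This is precisely why the statement packages the assumption $\phi_i(x) \geq 0$ into the hypothesis, and it is the only place the non-negativity is used.
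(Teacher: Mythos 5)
Your proof is correct. Note that the paper does not actually prove this statement; it simply cites \cite[P.~97, B.1.d]{marshall2011inequalities}, so there is no in-paper argument to compare against. Your self-contained verification is exactly the right one: with the definition of Schur-convexity as majorization-monotonicity ($\boldsymbol{x} \succ \boldsymbol{y}$ implies $\phi(\boldsymbol{x}) \geq \phi(\boldsymbol{y})$, which is the sense the paper uses in Proposition~\ref{majorization_thory1}), the claim reduces to the elementary fact that termwise multiplication of $k$ inequalities between nonnegative quantities preserves the direction, and your telescoping induction establishes that cleanly. Your closing remark that nonnegativity is the essential hypothesis — and the only place it enters — is also accurate and worth stating, since without it the product of Schur-convex functions need not be Schur-convex. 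Nothing is missing.
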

\begin{proof}
For the details of this proof please refer to \cite[P. 97, B.1.d.]{marshall2011inequalities}.
\end{proof}
\subsection{Stochastic order theory}
In this subsection, our focus will be on convex order. 
\begin{definition}
If $X$ and $Y$ are two RVs such that 
\begin{equation}
E[\phi(X)]\leq E[\phi(Y)]\label{convex_order}
\end{equation}
for all convex function $\phi: \mathbb{R}\rightarrow \mathbb{R}$,  provided the expectation exist, then $X$ is said to be smaller than $Y$ in the convex order, denoted by $X \leq_{cx} Y$.
\end{definition}
Note that if \eqref{convex_order} holds then  $Y$ is \textbf{more variable} than $X$ \cite{shaked2007stochastic}. We now briefly state the theorems in convex order theory  that is relevant to this work.
\begin{theorem}
\label{stochastic_order}
Let $X_{1}, X_{2},\cdots, X_{N}$ be exchangeable RVs. Let $\boldsymbol{a}=(a_1,a_2,\cdots,a_N)$ and $\boldsymbol{b}=(b_1,b_2,\cdots,b_N)$ be two vectors of constants. If $\boldsymbol{a}\prec \boldsymbol{b}$, then
\begin{equation}
\sum\limits_{i=1}^{N}a_iX_i\leq_{cx}\sum\limits_{i=1}^{N}b_iX_i.
\end{equation}
\end{theorem}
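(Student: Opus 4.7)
The plan is to reduce the claim to the classical Hardy--Littlewood--P\'olya characterization of majorization: $\boldsymbol{a}\prec \boldsymbol{b}$ if and only if $\boldsymbol{a}$ lies in the convex hull of the permutations of $\boldsymbol{b}$ (equivalently, $\boldsymbol{a} = \boldsymbol{b}P$ for a doubly stochastic matrix $P$, which by Birkhoff's theorem is a convex combination of permutation matrices). So I would first write
\begin{equation*}
a_i = \sum_{k} \alpha_k\, b_{\pi_k(i)}, \qquad i=1,\dots,N,
\end{equation*}
where $\{\pi_k\}$ are permutations of $\{1,\dots,N\}$, $\alpha_k \geq 0$, and $\sum_k \alpha_k = 1$.

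Next, I would plug this representation into $\phi\bigl(\sum_i a_i X_i\bigr)$ for an arbitrary convex $\phi$. Writing $\sum_i a_i X_i = \sum_k \alpha_k \bigl(\sum_i b_{\pi_k(i)} X_i\bigr)$ and applying Jensen's inequality (convexity of $\phi$) gives
\begin{equation*}
\phi\!\left(\sum_{i=1}^{N} a_i X_i\right) \;\leq\; \sum_{k} \alpha_k\, \phi\!\left(\sum_{i=1}^{N} b_{\pi_k(i)} X_i\right).
\end{equation*}
Taking expectations and interchanging sum and expectation is routine (assuming the expectations exist, as the convex order definition requires).

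The third step is the exchangeability argument, which is the crux. By reindexing via $j=\pi_k(i)$, $\sum_i b_{\pi_k(i)} X_i = \sum_j b_j X_{\pi_k^{-1}(j)}$; then exchangeability of $(X_1,\dots,X_N)$ says that $(X_{\pi_k^{-1}(1)},\dots,X_{\pi_k^{-1}(N)})$ has the same joint law as $(X_1,\dots,X_N)$, so the random variables $\sum_j b_j X_{\pi_k^{-1}(j)}$ and $\sum_j b_j X_j$ are equal in distribution. Therefore $E\!\left[\phi\bigl(\sum_i b_{\pi_k(i)} X_i\bigr)\right] = E\!\left[\phi\bigl(\sum_j b_j X_j\bigr)\right]$ for every $k$.

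Combining the three steps and using $\sum_k \alpha_k = 1$ yields $E\!\left[\phi\bigl(\sum_i a_i X_i\bigr)\right] \leq E\!\left[\phi\bigl(\sum_j b_j X_j\bigr)\right]$ for every convex $\phi$, which is exactly $\sum_i a_i X_i \leq_{cx} \sum_i b_i X_i$. I expect the main obstacle to be the clean handling of the exchangeability step (making sure the index swap is carried out correctly and that equality in distribution is enough to conclude equality of the $\phi$-expectations), rather than any heavy computation; the Hardy--Littlewood--P\'olya/Birkhoff step is standard once invoked.
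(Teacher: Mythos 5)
Your proof is correct: the Birkhoff/Hardy--Littlewood--P\'olya decomposition $a_i=\sum_k\alpha_k b_{\pi_k(i)}$, Jensen's inequality for convex $\phi$, and the exchangeability step $\sum_j b_j X_{\pi_k^{-1}(j)}\stackrel{\mathcal{D}}{=}\sum_j b_j X_j$ are all carried out properly, and the finite sum--expectation interchange is harmless. The paper itself offers no argument for this theorem --- it simply cites \cite[Theorem 3.A.35]{shaked2007stochastic} --- and what you have written is precisely the standard proof of that cited result, so there is nothing to flag.
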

\begin{proof}
The details of the proof is given in \cite[Theorem 3.A.35]{shaked2007stochastic}.
\end{proof}

\begin{theorem}
\label{stochastic_order1}
If $X\leq_{cx} Y$ and $f(.)$ is convex, then $E[f(X)]\leq E[f(Y)]$.
\end{theorem}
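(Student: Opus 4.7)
The plan is essentially a one-line reduction to the definition of convex order that was just stated. Recall the definition: $X \leq_{cx} Y$ holds precisely when $E[\phi(X)] \leq E[\phi(Y)]$ for every convex $\phi : \mathbb{R} \to \mathbb{R}$ for which both expectations exist. Given the hypotheses $X \leq_{cx} Y$ and $f$ convex, I would simply specialize the defining inequality to $\phi = f$ and read off the conclusion $E[f(X)] \leq E[f(Y)]$. No induction, no inequality manipulation, and no appeal to Theorem \ref{stochastic_order} or to the majorization results is needed.

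The only point that warrants attention is the integrability side condition baked into the definition: the expectations must exist for $\phi = f$. In the intended applications later in the paper, $f$ will be the logarithmic rate functional (or a similar bounded-in-expectation map of the interference power), so existence of $E[f(X)]$ and $E[f(Y)]$ is automatic and this is a non-issue in context. For a fully formal write-up one can simply append the phrase \emph{provided the expectations exist} to match the wording of the preceding definition, or cite the corresponding textbook entry, e.g., the equivalent formulation of $\leq_{cx}$ in Shaked and Shanthikumar.

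I do not anticipate any obstacle: the theorem is almost tautological once the definition is unwound, and is recorded as a separate statement only because it is the precise form in which the convex-order property will later be invoked to compare the ergodic rate between the independent-interferer case and the correlated-interferer case. The main intellectual content of the rate comparison will lie in Theorem \ref{stochastic_order} (exchangeability plus majorization implies convex order) and in constructing a majorization between the two eigenvalue vectors; Theorem \ref{stochastic_order1} is just the bridge that then converts the convex-order relation into the desired expected-rate inequality.
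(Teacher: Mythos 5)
Your proposal is correct: with the definition of $\leq_{cx}$ exactly as stated in the paper (namely, $E[\phi(X)]\leq E[\phi(Y)]$ for all convex $\phi$ for which the expectations exist), the theorem follows by taking $\phi=f$, and your remark about the integrability side condition is the only caveat worth recording. The paper itself offers no argument at all here --- it simply cites \cite[Theorem 7.6.2]{kaas2001modern} --- so your direct unwinding of the definition is, if anything, more self-contained than what the paper provides, and it correctly identifies that the real work in the rate comparison lies in Theorem \ref{stochastic_order} and the majorization of the eigenvalue vectors, not in this bridging statement.
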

\begin{proof}
The details of the proof is given in \cite[Theorem 7.6.2]{kaas2001modern}.
\end{proof}

\section{Comparison of Coverage Probability} 
In this section, we first  compare the coverage probability in the independent case and correlated case, and analytically quantify the impact of correlation when both  SoI and interferers experience Nakagami-m fading. Then we analyse the impact of correlation on coverage probability for the scenario where SoI experiences Nakagami-m fading and interferers experience $\eta$-$\mu$ fading. The coverage probability expression when SoI experiences Nakagami-m fading and interferers  experience independent Nakagami-m fading  is given by \cite{7130668}
\begin{equation*}
C_p(r)=\frac{\Gamma(\sum\limits_{i=1}^{N}m_i+m)}{\Gamma\left(\sum\limits_{i=1}^{N}m_i+1\right)}\frac{1}{\Gamma{(m)}} \prod\limits_{i=1}^{N}\left(\frac{1}{Tr^{\alpha}m\lambda_i+1}\right)^{m_i} \times
\end{equation*}
\begin{equation}
 F_D^{(N)}\left[1-m,m_1,\cdots,m_{N}; \sum\limits_{i=1}^{N}m_i+1;\frac{1}{Tr^{\alpha} m\lambda_1+1},\cdots,\frac{1}{Tr^{\alpha}m\lambda_N+1}\right] \label{eq:cov7} \text{ where } \lambda_i=\frac{d_i^{-\alpha}}{m_i}
\end{equation}
Here $m_i$ is the shape parameter of the $i^{th}$ interferer. Note that the coverage probability expression for the correlated case is derived in \cite{7130668} when the interferers shape parameter are all equal, i.e., $m_i=m_c \forall i$.   The coverage probability expression when  SoI experiences Nakagami-m fading and interferers  experience correlated Nakagami-m fading is given by \cite{7130668}
\begin{equation*}
\hat{C}_p(r)=\frac{\Gamma(Nm_c+m)}{\Gamma\left(Nm_c+1\right)}\frac{1}{\Gamma{(m)}} \prod\limits_{i=1}^{N}\left(\frac{1}{Tr^{\alpha}m\hat{\lambda}_i +1}\right)^{m_c} \times
\end{equation*}
\begin{equation}
 F_D^{(N)}\left[1-m,m_c,\cdots,m_{c}; Nm_c+1;\frac{1}{Tr^{\alpha}m\hat{\lambda}_1+1},\cdots,\frac{1}{Tr^{\alpha}m\hat{\lambda}_N+1}\right] \label{eq:cov8}
\end{equation}
here $\hat{\lambda}_i$s are the eigenvalues of the matrix $\mathbf{A=DC}$, where $\mathbf{D}$ is the diagonal matrix with entries $\lambda_i$ and $\mathbf{C}$ is the symmetric positive definite (s.p.d.) $N\times N$ matrix defined by
\begin{equation}
\mathbf{C}=\left[ \begin{array}{cccc}
 1 & \sqrt{\rho_{12}} &...&\sqrt{\rho_{1N}}   \\ \sqrt{\rho_{21}} &1&...&\sqrt{\rho_{2N}} \\  \cdots &\cdots &\ddots & \cdots \\ \sqrt{\rho_{N1}} &\cdots &\cdots &1  \end{array} \right],\label{correlation}
\end{equation}
where $\rho_{ij}$ denotes the correlation coefficient between $h_i$ and $h_j$, and is given by,
\begin{equation}
\rho_{ij}=\rho_{ji}=\frac{cov(h_i,h_j)}{\sqrt{\text{var}(h_i)\text{var}(h_j)}}, 0\leq\rho_{ij}\leq 1, i,j=1,2, \cdots, N.
\end{equation}
${cov(h_i,h_j)}$   denotes the covariance between  $h_i$ and $h_j$. $\mathbf{C}$ is the s.p.d. $ N\times N$ matrix given in \eqref{correlation}.
 Since $\mathbf{A=DC}$, it is given by
\begin{equation}
 \mathbf{A}=\left[ \begin{array}{cccc}
  \lambda_1 & \lambda_1\sqrt{\rho_{12}} &\cdots&\lambda_1\sqrt{\rho_{1N}}   \\ \lambda_2\sqrt{\rho_{21}} &\lambda_2&\cdots&\lambda_2\sqrt{\rho_{2N}} \\ \vdots & \vdots&\ddots&\vdots \\ \lambda_N\sqrt{\rho_{N1}} &\cdots&\cdots&\lambda_N  \end{array} \right]. \label{A}
\end{equation}

Note that the coverage probability expression for the correlated case is derived when the interferers shape parameter are all equal and hence for a fair comparison we consider equal shape parameter for the independent case also, i.e., $m_i=m_c \text{ }\forall i $. Therefore, the coverage probability when interferers are independent and $m_i=m_c \text{ }\forall i $ is given by
\begin{equation*}
C_p(r)=\frac{\Gamma(Nm_c+m)}{\Gamma\left(Nm_c+1\right)}\frac{1}{\Gamma{(m)}} \prod\limits_{i=1}^{N}\left(\frac{1}{Tr^{\alpha}m \lambda_i +1}\right)^{m_c} \times
\end{equation*}
\begin{equation}
 F_D^{(N)}\left[1-m,m_c,\cdots,m_{c}; Nm_c+1;\frac{1}{Tr^{\alpha}  \lambda_1m+1},\cdots,\frac{1}{Tr^{\alpha}\lambda_N m+1}\right] \label{eq:cov9}
\end{equation}
Our goal is to compare the coverage probability in the presence of independent interferers, i.e., $C_p(r)$ given in \eqref{eq:cov9} and the coverage probability in the presence of correlated interferers, i.e., $\hat{C}_p(r)$ given in \eqref{eq:cov8}.
We first start with the special case when user channel's fading is Rayleigh fading (i.e, $m=1$) and interferers experience Nakagami-m fading. When $m=1$, the coverage probability given in \eqref{eq:cov9} reduces to
\begin{equation}
C_p(r)= \prod\limits_{i=1}^{N}\left(\frac{1}{Tr^{\alpha}\lambda_i +1}\right)^{m_c} F_D^{(N)}\left[0,m_c,\cdots,m_{c}; Nm_c+1;\frac{1}{Tr^{\alpha}\lambda_1+1},\cdots,\frac{1}{Tr^{\alpha}\lambda_N+1}\right] \label{eq:cov1}
\end{equation}
A series expression for $F_D^{(N)}( .)$ involving N-fold infinite sums is given by
\begin{equation} 
F_D^{(N)}[a,b_1,\cdots, b_N;c;x_1,\cdots, x_N]=\sum\limits_{i_1\cdots i_N=0}^{\infty}\frac{(a)_{i_1+\cdots+i_N}(b_1)_{i_1}\cdots(b_N)_{i_N}}{(c)_{i_1+\cdots+i_N}}\frac{x_1^{i_1}}{i_1!}\cdots\frac{x_N^{i_N}}{i_N!},\label{eq:lauricella1}
\end{equation}
\begin{equation*}
\max\{|x_1|,\cdots|x_N|\}<1,
\end{equation*}
where, $(a)_n$ denotes the Pochhammer symbol which is defined as $(a)_n=\frac{\Gamma(a+n)}{\Gamma(a)}$. With the help of series expression  and  using the fact   that $(0)_0=1$ and $(0)_k=0 \text{ }\forall \text{ } k\geq 1$,  the coverage probability given in \eqref{eq:cov1} can be reduced to 
\begin{equation}
C_p(r)= \prod\limits_{i=1}^{N}\left(\frac{1}{Tr^{\alpha}\lambda_i +1}\right)^{m_c}
\end{equation} 
Similarly, the coverage probability in correlated case when SoI experiences Rayleigh fading $\hat{C}_p(r)$ is given by
\begin{equation}
\hat{C}_p(r)= \prod\limits_{i=1}^{N}\left(\frac{1}{Tr^{\alpha}\hat{\lambda}_i +1}\right)^{m_c}
\end{equation}
We now state and prove the following theorem for the case where the SoI experiences Rayleigh fading  and interferers experience  Nakagami-m fading and then generalize it to the  case where user also experiences Nakagami-m fading. 
 
\begin{theorem}
\label{main_thoerem}
The coverage probability in correlated case is higher than that of the independent case, when user's  channel undergoes Rayleigh fading, i.e.,
\begin{equation}
\prod\limits_{i=1}^{N}\left(\frac{1}{{1+k\hat{\lambda}_i}}\right)^{m_c}\geq \prod\limits_{i=1}^{N}\left(\frac{1}{{1+k\lambda_i}}\right)^{m_c}
\end{equation}
where $\hat{\lambda}_i$s  are the eigenvalues of  matrix $\mathbf{A}$ and $\lambda_i$s are the diagonal elements of  matrix $\mathbf{A}$ given in \eqref{A}, and $k=Tr^{\alpha} $ is a non negative constant.
\end{theorem}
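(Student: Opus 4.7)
My plan is to reduce the desired inequality to the Schur--Horn majorization between the diagonal and the spectrum of a Hermitian positive semidefinite matrix, and then to exploit the Schur-concavity of $\sum_{i}\log(1+kx_{i})$.

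First I would re-symmetrize $\mathbf{A}=\mathbf{DC}$ by forming $\mathbf{M}=\mathbf{D}^{1/2}\mathbf{C}\mathbf{D}^{1/2}$, which is Hermitian positive semidefinite and similar to $\mathbf{A}$. The eigenvalues of $\mathbf{M}$ are therefore exactly $\hat{\lambda}_{1},\ldots,\hat{\lambda}_{N}$, while its diagonal entries are $M_{ii}=\lambda_{i}$ since $\mathbf{C}_{ii}=1$. The Schur--Horn theorem then asserts that the diagonal of a Hermitian matrix is majorized by its spectrum, so $\boldsymbol{\hat{\lambda}}\succ\boldsymbol{\lambda}$ in the sense of ordinary majorization (Definition~2).

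Next I would note that $x\mapsto\log(1+kx)$ has second derivative $-k^{2}/(1+kx)^{2}<0$ and is therefore concave in $x$. Hence $\Psi(\boldsymbol{x})=-\sum_{i}\log(1+kx_{i})$ is a symmetric sum of convex one-variable functions and, by Proposition~\ref{majorization_thory1}, is Schur-convex. Consequently $\Phi(\boldsymbol{x})=\sum_{i}\log(1+kx_{i})$ is Schur-concave, and applying this to $\boldsymbol{\hat{\lambda}}\succ\boldsymbol{\lambda}$ gives $\Phi(\boldsymbol{\hat{\lambda}})\leq \Phi(\boldsymbol{\lambda})$, i.e.
\begin{equation*}
\prod_{i=1}^{N}(1+k\lambda_{i})\;\geq\;\prod_{i=1}^{N}(1+k\hat{\lambda}_{i}).
\end{equation*}
Taking reciprocals and raising to the non-negative power $m_{c}$ delivers the claimed inequality.

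The main obstacle I anticipate is the first step: the matrix $\mathbf{A}=\mathbf{DC}$ is not itself symmetric, so Schur--Horn cannot be applied to it directly, and the similarity transformation through $\mathbf{D}^{1/2}$ is the key manoeuvre that exposes the correct Hermitian realisation with diagonal $\boldsymbol{\lambda}$ and spectrum $\boldsymbol{\hat{\lambda}}$. Once the strong majorization of $\boldsymbol{\lambda}$ by $\boldsymbol{\hat{\lambda}}$ is in hand, the remainder is a textbook Schur-concavity calculation. An alternative route starts from Theorem~\ref{majorization_theory1} (noting $\mathbf{D}\circ\mathbf{C}=\mathbf{D}$ because $\mathbf{C}$ has unit diagonal), but that theorem only furnishes a multiplicative tail-product comparison of the two spectra and leaves a gap between weak log-majorization and the additive majorization required by the Schur-concavity argument; closing that gap is essentially what Schur--Horn accomplishes directly in the approach above.
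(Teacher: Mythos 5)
Your proof is correct, and it reaches the crucial majorization $\boldsymbol{\hat{\lambda}}\succ\boldsymbol{\lambda}$ by a genuinely different route. The paper gets there by applying Theorem \ref{majorization_theory1} to $\mathbf{D}\circ\mathbf{C}=\mathbf{D}$, which yields only the tail-product inequalities $\prod_{i=k}^{N}\lambda_i\geq\prod_{i=k}^{N}\hat{\lambda}_i$, and then combines these with the trace identity $\sum_i\lambda_i=\sum_i\hat{\lambda}_i$ to assert additive majorization; as you note, that combination is a weak log-majorization statement and closing the gap to the additive majorization needed by Proposition \ref{majorization_thory1} requires an extra argument that the paper leaves implicit. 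Your symmetrization $\mathbf{M}=\mathbf{D}^{1/2}\mathbf{C}\mathbf{D}^{1/2}$, which is similar to $\mathbf{A}=\mathbf{D}\mathbf{C}$ (so has spectrum $\boldsymbol{\hat{\lambda}}$) and has diagonal exactly $\boldsymbol{\lambda}$ because $\mathbf{C}$ has unit diagonal, lets you invoke the Schur--Horn theorem and obtain $\boldsymbol{\hat{\lambda}}\succ\boldsymbol{\lambda}$ in one rigorous step; this is the cleaner and more robust argument, and it does not even need Theorem \ref{majorization_theory1}. The second half also differs in presentation: the paper argues that $\prod_i\left(1+kx_i\right)^{-m_c}$ is symmetric and convex by passing through a product-of-convex-functions step (citing Theorem \ref{schur_convex}, which is really a statement about products of Schur-convex functions), and that step needs care since a product of convex functions is not convex in general; you instead work with the additively separable functional $\sum_i\log(1+kx_i)$, whose Schur-concavity is immediate from Proposition \ref{majorization_thory1}, and exponentiate at the end, which sidesteps the delicacy entirely. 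Both arguments terminate in the same Schur-monotonicity conclusion, so your proposal is a valid and in places tighter alternative to the paper's proof.
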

\begin{proof}
\textbf{ Firstly, we show that  $\boldsymbol{\hat{\lambda}} \succ \boldsymbol{\lambda}$ where $\boldsymbol{\hat{\lambda}}=[\hat{\lambda}_1, \cdots, \hat{\lambda}_n]$ and $\boldsymbol{\lambda}=[{\lambda_1}, \cdots, {\lambda_n}]$. Note that the Hadamard product of $\mathbf{D}$ and $\mathbf{C}$, i.e., $\mathbf{D} \circ \mathbf{C}$  is defined as\footnote{Definition of Hadamard product is given in Definition \ref{Hadamard}} }
\begin{equation}
 \mathbf{D} \circ \mathbf{C}=\left[ \begin{array}{cccc}
  \lambda_1 & 0 &\cdots&0   \\ 0 &\lambda_2&\cdots&0 \\ \vdots & \vdots&\ddots&\vdots \\ 0 &\cdots&\cdots&\lambda_N  \end{array} \right]. \label{A_circ}
\end{equation}
\textbf{Observe that $\mathbf{D} \circ \mathbf{C}$ is a diagonal matrix, the eigenvalues of the $\mathbf{D} \circ \mathbf{C}$ are $\lambda_i$s. Since the eigenvalues of  matrix $\mathbf{A}=\mathbf{D}\mathbf{C}$ are $\hat{\lambda_i}$s and $\mathbf{D}$ and $\mathbf{C}$ are the positive semi-definite matrices,  hence  from Theorem \ref{majorization_theory1} (given in Section II),}
\begin{equation}
\prod\limits_{i=k}^{N}\lambda_i \geq \prod\limits_{i=k}^{N} \hat{\lambda}_i, k=1,2,\cdots,N \label{weak1}
\end{equation}
\textbf{Note that $\hat{\lambda_i}$s are the eigenvalues and $\lambda_i$s are the diagonal elements of a symmetric matrix $\mathbf{A}=\mathbf{D}\mathbf{C}$, Hence}
\begin{equation}
\sum\limits_{i=1}^{N}\lambda_i=\sum\limits_{i=1}^{N}\hat{\lambda_i} \label{weak2}
\end{equation}
\textbf{ If conditions given in \eqref{weak1} and \eqref{weak2} are satisfied then $\boldsymbol{\hat{\lambda}} \succ \boldsymbol{\lambda}$ \cite{marshall2011inequalities} .} Now if it can be shown that $\prod\limits_{i=1}^{N}\left(\frac{1}{{1+k\lambda_i}}\right)^{m_c}$ $\left(\text { and }\prod\limits_{i=1}^{N}\left(\frac{1}{{1+k\hat{\lambda}_i}}\right)^{m_c}\right)$ is a Schur-convex function then by a simple application of Proposition \ref{majorization_thory1} (given in Section II) it is evident that $\prod\limits_{i=1}^{N}\left(\frac{1}{{1+k\hat{\lambda}_i}}\right)^{m_c}\geq \prod\limits_{i=1}^{N}\left(\frac{1}{{1+k\lambda_i}}\right)^{m_c}$. To prove that $\prod\limits_{i=1}^{N}\left(\frac{1}{{1+kx_i}}\right)^{m_c}$ is a Schur convex function we need to show that it is a symmetric and convex function \cite{marshall2011inequalities}. 

It is apparent that the function $\prod\limits_{i=1}^{N}\left(\frac{1}{{1+kx_i}}\right)^{m_c}$ is a symmetric function due to the fact that any two of its arguments can be interchanged without changing the value of the function.
So we now  need to show  that the function  $f(x_1,\cdots,x_n)=\prod\limits_{i=1}^{N}\left(\frac{1}{1+kx_i}\right)^{a} $ is a convex function where $x_i\geq 0$, $a > 0$. Using the Theorem \ref{schur_convex}, it is apparent that if the function $f(x)=\left(\frac{1}{1+kx}\right)^{a} $ is convex function then $f(x_1,\cdots,x_n)=\prod\limits_{i=1}^{N}\left(\frac{1}{1+kx_i}\right)^{a} $ would be convex function. Note that $f(x)=\left(\frac{1}{1+kx}\right)^{a} $ is convex function when $x \geq 0$ and $a>0$ due to the fact that double differentiation of $f(x)=\left(\frac{1}{1+kx}\right)^{a} $ is always non negative, i.e., $f''(x)=a (a+1) k^2 \left(\frac{1}{k x+1}\right)^{a+2} \geq 0$. Thus, $f(x_1,\cdots,x_n)=\prod\limits_{i=1}^{N}\left(\frac{1}{1+kx_i}\right)^{a} $ is a convex function.

Since $\prod\limits_{i=1}^{N}\left(\frac{1}{{1+kx_i}}\right)^{m_c}$ is a convex function and  a symmetric function therefore, it is a Schur-convex function. We have shown that  $\boldsymbol{\hat{\lambda}}\succ \boldsymbol{\lambda}$  and $\prod\limits_{i=1}^{N}\left(\frac{1}{1+kx_i}\right)^{m_c}$ is a Schur-convex function. Therefore, from Proposition 1,  
 $\prod\limits_{i=1}^{N}\left(\frac{1}{{1+k\hat{\lambda}_i}}\right)^{m_c}\geq \prod\limits_{i=1}^{N}\left(\frac{1}{{1+k\lambda_i}}\right)^{m_c}$.
\end{proof}
Thus, the coverage probability in the presence of correlation among the interferers  is greater than or equal  to the coverage probability in  the independent case,  when user channel undergoes Rayleigh fading and the interferers shape parameter $m_i=m_c \text{ }\forall i$. The same result was shown for the case of both Rayleigh user channel and Rayleigh interferers in \cite{7015633} using Vieta formula. By exploiting the mathematical tool of Majorization we are able to provide a much simpler proof.
Next, we compare the coverage probability for general case, i.e., when   $m$ is arbitrary.
\begin{theorem}
\label{theorem_shape}
The coverage probability in the presence of the correlated interferers is greater than   or equal to the coverage probability in presence of independent interferers, when user channel's shape parameter is less than or equal to $1$, i.e., $m \leq 1$. When  $m>1$, coverage probability in the presence of independent is not always lesser than the coverage probability in the presence of correlated interferers. \end{theorem}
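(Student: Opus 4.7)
The plan is to follow the same template as Theorem \ref{main_thoerem}: first establish $\boldsymbol{\hat{\lambda}} \succ \boldsymbol{\lambda}$, then show that the full coverage probability expression is a Schur-convex function of $\boldsymbol{\lambda}$, and finally conclude via Proposition \ref{majorization_thory1}. The majorization step $\boldsymbol{\hat{\lambda}} \succ \boldsymbol{\lambda}$ was already carried out in the proof of Theorem \ref{main_thoerem} using Theorem \ref{majorization_theory1} together with the equality of traces of $\mathbf{A}$ and $\mathbf{D}\circ\mathbf{C}$, and that argument did not use $m=1$ anywhere. So the real work is to deal with the extra Lauricella factor that now appears in \eqref{eq:cov8} and \eqref{eq:cov9} for general $m$.

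For $m=1$ the Lauricella factor collapses to $1$ by \eqref{eq:lauricella1} (only the $(0)_0=1$ term survives), and the statement reduces to Theorem \ref{main_thoerem}. For $0<m<1$ the plan is to rewrite the Lauricella factor through its Euler-type integral representation
\begin{equation*}
F_D^{(N)}[a,b_1,\ldots,b_N;c;x_1,\ldots,x_N] = \frac{\Gamma(c)}{\Gamma(a)\Gamma(c-a)} \int_0^1 t^{a-1}(1-t)^{c-a-1} \prod_{i=1}^N (1-x_i t)^{-b_i}\, dt,
\end{equation*}
which is valid because $a=1-m>0$ and $c-a=Nm_c+m>0$. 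Substituting $x_i=1/(1+k\lambda_i)$ with $k=Tr^{\alpha}m$ gives $1-x_i t = (1+k\lambda_i-t)/(1+k\lambda_i)$, and after absorbing the prefactor $\prod_i(1+k\lambda_i)^{-m_c}$ one obtains, up to a positive constant,
\begin{equation*}
C_p(r) \;\propto\; \int_0^1 t^{-m}(1-t)^{Nm_c+m-1}\,\prod_{i=1}^N \left(\frac{1}{1+k\lambda_i - t}\right)^{m_c} dt.
\end{equation*}
The kernel $t^{-m}(1-t)^{Nm_c+m-1}$ is non-negative on $(0,1)$, and integrability at $t=0$ is secured precisely by $m<1$.

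For each fixed $t\in(0,1)$ the integrand $\prod_i(1+k\lambda_i-t)^{-m_c}$ is the same object that appeared in Theorem \ref{main_thoerem}, only with $1+k\lambda_i$ replaced by $1+k\lambda_i-t>0$; the symmetry plus convexity argument of Theorem \ref{main_thoerem} carries over verbatim to show this integrand is Schur-convex in $\boldsymbol{\lambda}$. Since majorization is preserved by integration against a non-negative kernel ($\boldsymbol{x}\succ\boldsymbol{y}$ implies a pointwise inequality, which survives integration), the whole expression is Schur-convex in $\boldsymbol{\lambda}$; combined with $\boldsymbol{\hat{\lambda}}\succ\boldsymbol{\lambda}$ and Proposition \ref{majorization_thory1}, this delivers $\hat{C}_p(r)\geq C_p(r)$ for $m\leq 1$. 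For the converse part ($m>1$), the Euler representation breaks down (since $t^{-m}$ is no longer integrable at $0$) and the coefficients $(1-m)_j$ in \eqref{eq:lauricella1} can take negative values, so the Schur-convexity structure is lost. I would settle the second half of the theorem by exhibiting one or two small numerical counterexamples (say $N=2$ with chosen $m_c$, $T$, $\lambda_i$ and correlation coefficient) producing $\hat{C}_p>C_p$ for one parameter setting and $\hat{C}_p<C_p$ for another. The main obstacle is this $m>1$ regime: no clean majorization/convex-order argument seems available, so one must fall back on direct numerical evidence rather than a closed-form inequality.
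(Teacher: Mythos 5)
Your proposal is correct, but for the main case $m\leq 1$ it takes a genuinely different route from the paper. The paper expands $F_D^{(N)}$ via its $N$-fold series \eqref{eq:lauricella1}, groups the terms into symmetric sums with Pochhammer coefficients $K_{i,j}\propto (1-m)_n$, shows each symmetric sum is Schur-convex in $\boldsymbol{\lambda}$, and then argues by the sign of the coefficients: for $m<1$ every $(1-m)_n>0$, so the termwise inequality (together with $\hat{K}\geq K$) gives $\hat{C}_p\geq C_p$, while for $m>1$ the coefficients alternate in sign and the argument collapses. You instead use the single Euler integral representation of $F_D^{(N)}$, valid precisely when $a=1-m>0$, under which the prefactor $\prod_i(1+k\lambda_i)^{-m_c}$ cancels and the whole coverage probability becomes a non-negative mixture over $t\in(0,1)$ of functions $\prod_i(1+k\lambda_i-t)^{-m_c}$, each of which is Schur-convex by the identical symmetry-plus-convexity argument of Theorem \ref{main_thoerem} (with $1$ replaced by $1-t>0$). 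This is cleaner in two respects: it avoids the term-by-term sign bookkeeping of the Pochhammer coefficients, and it makes the separate comparison $\hat{K}>K$ unnecessary since the prefactor is absorbed into the integrand; it also makes transparent exactly where $m\leq 1$ enters (integrability of $t^{-m}$ at the origin, equivalently positivity of the Euler kernel). The small caveats are that you should handle $m=1$ separately (as you do, since $a=0$ sits on the boundary of validity of the Euler representation) and phrase the last step as ``the pointwise inequality in $t$ is preserved under integration against a non-negative kernel'' rather than ``majorization is preserved by integration.'' For the $m>1$ half, your plan to exhibit numerical counterexamples in both directions is essentially what the paper does as well: its appendix only observes that the $K_{i,j}$ change sign so no ordering can be inferred, and it defers the actual demonstration that both orderings occur to the simulations (the $m=3$ curves in the numerical section), so your treatment is at least as rigorous on this point.
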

\begin{proof}
Please see Appendix.
\end{proof}
Summarizing, the coverage probability in the presence of correlated interferers is greater than or equal to the coverage probability in presence of independent interferers, when user channel's shape parameter is less than or equal to $1$, i.e., $m\leq 1$. When $m>1$, one can not say whether coverage probability is better in correlated interferer case or independent interferer case. Note that when $m\leq 1$, usually the interferers $m_i$ is also smaller than $1$. However, the given proof   holds for both $m_i>1 $ and $m_i<1$.

\subsection{$\eta$-$\mu$ fading }
In this subsection, we analyse the impact of correlation on coverage probability for the scenario where SoI experiences Nakagami-m fading and interferers experience $\eta$-$\mu$ fading.  Recently, the  $\eta$-$\mu$ fading distribution with two shape parameters $\eta$ and $\mu$ has been proposed to model a general non-line-of-sight propagation scenario \cite{4231253}.  It includes Nakagami-q (Hoyt), one sided Gaussian, Rayleigh and Nakagami-m as special cases. The coverage probability expression when SoI experience Nakagami-m fading and interferers experience independent $\eta$-$\mu$ fading with equal $\mu$, i.e.,  $\mu_i=\mu_c \forall i$ is given in \eqref{eq:cov_eta_mu_system}. 
The coverage probability expression when SoI experience Nakagami-m fading and interferers experience correlated $\eta$-$\mu$ fading with equal $\mu$, i.e.,  $\mu_i=\mu_c \forall i$ is given in \eqref{eq:cov_eta_mu_corr_system}.  Note that the functional form of the coverage probability expressions given in \eqref{eq:cov_eta_mu_system} and \eqref{eq:cov_eta_mu_corr_system}  are similar to the case when both SoI and interferers experience Nakagami-m fading.  Hence, the same analysis holds and the coverage probability in the presence of correlated interferers is higher than the coverage probability in the presence of independent interferers when  the user's  channel shape parameter is less than or equal to $1$, and interfering channel experience $\eta$-$\mu$ fading. However, one can not conclude anything when the user's  channel shape parameter is higher than $1$.
\section{Comparison of rate}
In this section, we analyse the impact of correlation on the rate for the scenario when both SoI and interferers experience $\eta$-$\mu$ fading with arbitrary parameters.
In other words, we compare the rate when interferers are independent with the rate when the interferers are correlated using  stochastic ordering theory.

The rate of a user at a distance $r$ when interferers are independent is $R=E[\ln(1+\frac{S}{I})]$, where $S$ is the desired user channel power. For correlated case, the average rate of a user at a distance $r$ is $\hat{R}=E[\ln(1+\frac{S}{\hat{I}})]$.  Here $I$ and $\hat{I}$ are the sum of  independent and  correlated interferers, respectively.  Using iterated expectation one can rewrite the rate as    
\begin{equation}
R=E_S\left[E_I\left[\ln\left(1+\frac{S}{I}\right)\bigg|S=s\right]\right], \text{ and } \hat{R}=E_S\left[E_{\hat{I}}\left[\ln\left(1+\frac{S}{\hat{I}}\right)\bigg|S=s\right]\right].
\end{equation}
Since the  expectation operator preserves inequalities, therefore if we can show that 
\begin{equation*}
E_{\hat{I}}\bigg[\ln\bigg(1+\frac{S}{\hat{I}}\bigg) \bigg|S=s\bigg]\geq E_I\left[\ln\left(1+\frac{S}{I}\right)\bigg|S=s\right],
\end{equation*}
then this implies $\hat{R}\geq R$.

The sum of interference power in the independent case can be written as  
\begin{equation}
I=\sum\limits_{i=1}^{N}h_id_i^{-\alpha}
\end{equation}
where $h_i$ is $\eta$-$\mu$ power RV.  It has been shown in \cite{5288931} that the $\eta$-$\mu$ power RV can be represented as the sum of two gamma RVs with suitable parameters. In other words, if $h_i$ is $\eta$-$\mu$ power RV then, 
\begin{equation}
h_i=x_i+y_i \text{ where } x_i\sim \mathcal{G}\left(\mu_c,\frac{1}{2\mu_c(1+\eta_i^{-1})}\right) \text{ and } y_i\sim \mathcal{G}\left(\mu_c,\frac{1}{2\mu_c(1+\eta_i)}\right)
\end{equation}
Now, the sum of interference power when interference experience $\eta$-$\mu$ RV can be written as
\begin{equation}
I=\sum\limits_{i=1}^{N}h_id_i^{-\alpha}=\sum\limits_{i=1}^{2N}\lambda_iG_i \text{ with } G_i\sim \mathcal{G}(\mu_c,1), \lambda_{2i-1}=\frac{d_i^{-\alpha}}{\mu_c(1+\eta_i^{-1})} \text{ and } \lambda_{2i}=\frac{d_i^{-\alpha}}{\mu_c(1+\eta_i)}.\label{eta_mu1}
\end{equation}
Similarly, for correlated case,
\begin{equation}
\hat{I}=\sum\limits_{i=1}^{N}\hat{h}_i=\sum\limits_{i=1}^{2N} \hat{\lambda}_iG_i \label{eta_mu}
\end{equation}
  Recall that these $\hat{h}_i$  are correlated, and  $\hat{\lambda}_i$s are the eigenvalues of the matrix $\boldsymbol{A}_{\eta}=\boldsymbol{D}_{\eta}\boldsymbol{C}_{\eta}$. In other words one can obtain a correlated sum of gamma variates by multiplying independent and identical distributed (i.i.d.) gamma variates with weight $\hat{\lambda}_i$s. Now, we use Theorem \ref{stochastic_order} (given in Section III) to show that $\hat{R}$ is always greater than equal to $R$. Note that a sequence of RVs $X_1,\cdots X_N$ is said to be exchangeable if for all $N$ and $\pi\in S(N)$ it holds that $X_1\cdots X_N \stackrel{\mathcal{D}}{=} X_{\pi(1)}\cdots X_{\pi(N)}$ where $S(N)$ is the group of permutations of $\{1,\cdots N\}$ and $\stackrel{\mathcal{D}}{=}$ denotes equality in distribution \cite{exch}. Furthermore, if $X_i$s are identically distributed,  they are exchangeable \cite[P. 129]{shaked2007stochastic}. Hence $G_i$s are exchangeable since they are identically distributed. It has already been shown that  $\boldsymbol{\hat{\lambda}} \succ\boldsymbol{ \lambda}$ in Section IV.  Hence by a direct application of  Theorem \ref{stochastic_order} (given in Section III), one obtains, $I \leq_{cx} \hat{I}$.

Note that $\ln(1+\frac{k}{x})$ is a convex function when $k\geq 0$ and $x \geq 0$ due to the fact that double differentiation of $\ln(1+\frac{k}{x})$ is always non negative, i.e., $\frac{\partial \frac{\partial \ln \left(\frac{k}{x}+1\right)}{\partial x}}{\partial x}=\frac{k (k+2 x)}{x^2 (k+x)^2}\geq 0 $. Note that $S$ and $I$ are non negative RVs, hence by a direct application of Theorem \ref{stochastic_order1} (given in Section III), one obtains 
\begin{equation}
E_{\hat{I}}\left[\ln\left(1+\frac{S}{\hat{I}}\right)\bigg|S=s\right]\geq E_I\left[\ln\left(1+\frac{S}{I}\right)\bigg|S=s\right]
\end{equation}
Since expectation preserve inequalities therefore, $E_S[E_I[\ln(1+\frac{S}{I})]] \leq E_S[E_{\hat{I}}[\ln(1+\frac{S}{\hat{I}})]]$. In other words, positive correlation among the interferers increases the rate.

Summarizing, the rate in the presence of the positive correlated interferers is greater than or equal to the rate in the presence of independent interferers, when SoI and interferers both experience $\eta$-$\mu$ fading. Now we briefly discuss the utility of our results in the presence of log normal shadowing.
\subsection{Log Normal Shadowing}
Although all the analysis so far  (comparison of the coverage probability and rate) considered only small scale fading and path loss, the analysis can be further extended to take into account shadowing effects. In general, the large scale fading, i.e, log normal shadowing   is modeled by zero-mean log-normal distribution which is given by,
\begin{equation*}
f_X(x)=\frac{1}{x\sqrt{2\pi(\frac{\sigma_{dB}}{8.686})^2}}\exp\left(-\frac{\ln^2(x)}{2(\frac{\sigma_{dB}}{8.686})^2}\right), x>0,
\end{equation*}
where $\sigma_{dB}$ is the shadow standard deviation represented in dB. Typically the value of $\sigma_{dB}$ varies from $3$ dB to $10$ dB \cite{3gpp},\cite{3gpp1}. It is shown in \cite{generalized} that the pdf of the  composite fading channel (fading and shadowing) can be expressed using the generalized-K (Gamma-Gamma) model. Also in \cite{moment}, it has been shown that the generalized-K pdf can be well approximated by Gamma pdf $\mathcal{G}(\beta, \gamma)$ using the moment matching method, with $\beta$ and $\gamma$ are given by 
\begin{equation}
\beta=\frac{1}{(\frac{1}{m}+1)\exp((\frac{\sigma_{dB}}{8.686})^2)-1}=\frac{m}{({m}+1)\exp((\frac{\sigma_{dB}}{8.686})^2)-m} \label{eq:shadow1} 
\end{equation} 
\begin{equation}  
\text{ and } \gamma=\frac{1+m}{m}\exp\left(\frac{3(\frac{\sigma_{dB}}{8.686})^2}{2}\right)-\exp\left(\frac{(\frac{\sigma_{dB}}{8.686})^2}{2}\right) \label{eq:shadow}
\end{equation} 
Thus, SIR $\eta_l$ of a user can now given by 
\begin{equation}
\eta_l(r)=\frac{Pkr^{-\alpha}}{\sum\limits_{i\in \phi}Pl_{i}d_i^{-\alpha}}
\end{equation}
where $k\sim \mathcal{G}(\beta, \gamma)$ and $l_{i}\sim \mathcal{G}(\beta_i, \gamma_i)$. Here 
\begin{equation}
\beta_i=\frac{m_i}{(m_i+1)\exp((\frac{\sigma_{dB}}{8.686})^2)-m_i} \text{ and }\gamma_i=\frac{(1+m_i)}{m_i}\exp\left(\frac{3(\frac{\sigma_{dB}}{8.686})^2}{2}\right)-\exp\left(\frac{(\frac{\sigma_{dB}}{8.686})^2}{2}\right)
\end{equation}
Further, the correlation coefficient between two identically distributed generalized-K RVs is derived in \cite[Lemma 1]{5501945}, and it is in terms of correlation coefficient  of the RVs corresponding to the short term fading component ($\rho_{i,j}$) and the correlation coefficient of the RVs corresponding to the shadowing component($\rho^s_{i,j}$). The resultant correlation coefficient ($\rho^l_{i,j}$) is then given by
\begin{equation}
\rho^l_{i,j}=\frac{\frac{\rho_{i,j}}{\left( \exp(\frac{\sigma_{dB}}{8.686})^2)-1\right)}+\rho^s_{i,j} m_i+\rho_{i,j}\rho^s_{i,j}}{m_i+\frac{1}{\left( \exp(\frac{\sigma_{dB}}{8.686})^2)-1\right)}+1}\label{rho}
\end{equation}

Note that after approximation, the SIR expression in the presence of log normal shadowing is similar to the SIR expression given in \eqref{sir}, where only small scale fading is present. Hence now  the coverage probability and  rate for the independent case and correlated case can be compared using the methods outlined in Section IV and Section V. In other words, it can be shown that  the coverage probability in the presence of correlated interferers is greater than or equal to the coverage probability in presence of independent interferers, when user's shape parameter is less than or equal to $1$, i.e., $\beta\leq 1$, in the presence of shadow fading. Also, the rate in the presence of positive correlated interferers is always greater than or equal to the rate in the presence of independent interferers, in the presence of shadow fading.

\subsection{Physical interpretation of the impact of correlated Interferers}
We know that the Nakagami-m distribution considers a signal composed of $n$ number of clusters of multipath waves. Within each cluster, the phases of scattered waves are random and have similar delay times. The delay-time spreads of different clusters are relatively large. More importantly, the non-integer Nakagami parameter $m$ is the real extension of integer $n$. One of the primary reason of parameter $m$ being real extension of $n$ is the non zero correlation among the clusters of multipath components\cite{4231253}. In other words, if there exist  correlation among the  clusters, the shape parameter of Nakagami-m fading decreases. 
 
Now, in order to find the physical interpretation of the impact of correlated interferers, we consider a cellular system where the interferes are equidistant and also the shape parameters are identical for every interference. When all the interferers are independent, the shape parameter of the total interference to be $mN$ where the shape parameter of each interfere is $m$ and the total number of interference is $N$ (since the sum of Gamma RV is Gamma RV). However, when there exists  correlation among the interferers, i.e., there exists correlation among the clusters of different interferers, the shape parameter of the total interference  decreases as observed in \cite{4231253}. For example, if the interferers are completely correlated, the shape parameter of total interference is only $m$, whereas it was $mN$, when the interferers were independent. The smaller the shape parameter more faster is the power attenuation of interferers. Hence, the rate increases when there exists correlation among  interferers.

 In the next section, we will show simulation results and discuss how those match with the theoretical results.
\section{Numerical Analysis and Application}
In this section, we study the  impact of correlation among interferers on the coverage probability and rate using simulations and numerical analysis. For simulations, we have considered the classic $19$ cell system associated with a hexagonal structure as shown in Fig. \ref{fig:hexagonal}. For a user we generate the channel fading power corresponding to its own channel as well as that corresponding to the $18$ interferers  and then compute the SIR per user. For correlation scenario, we generate correlated channel fading power  corresponding to the $18$ interferers and then compute SIR per user. Furthermore, based on the SIR, we find coverage probability and rate. Fig. \ref{fig:fig1}  depicts the impact of correlation among the interferers on the coverage probability for different values of shape parameter. Note that only small scale fading is considered in Fig. \ref{fig:fig1}. The  correlation among the interferers is defined by the correlation matrix in \eqref{correlation}  with  $\rho_{pq}=\rho^{|p-q|}$ where $p,q=1,\cdots ,N$ \cite{reig2008performance}. From Fig. \ref{fig:fig1}, it can be observed that for  $m=0.5$ and $m=1$, coverage probability in presence of correlation is higher than that of independent scenario (which match our analytical result).  For example, at $m=0.5$, coverage probability increases from $0.16$ in the independent case to $0.20$ in the correlated case and at $m=1$, coverage probability increases from $0.148$ to $0.216$ when user is at normalized distance $0.7$ from the BS. Whereas $m=3$, one cannot say  that coverage probability in presence of correlation is higher or lower than that of independent scenario. In other words, the coverage probability of independent interferers is higher than the coverage probability of correlated interferers when user is close to the BS. However, the coverage probability of independent interferers is significantly lower than the coverage probability of correlated interferers when the user is far from the BS.

Fig. \ref{fig:correlation} shows the  impact of correlation among the interferers on the coverage probability for different values of correlation coefficient. Here both small scale fading, i.e., $\eta$-$\mu$ fading and large scale fading, i.e., log normal shadowing are considered. The correlation among small scale fading is denoted by $\rho_s$ and the correlation among the large scale fading is denoted by $\rho_l$. It can be seen that as correlation coefficient increases the coverage probability with increases for the correlated interferers case.

\begin{figure}[ht]
 \centering
 \includegraphics[scale=0.4]{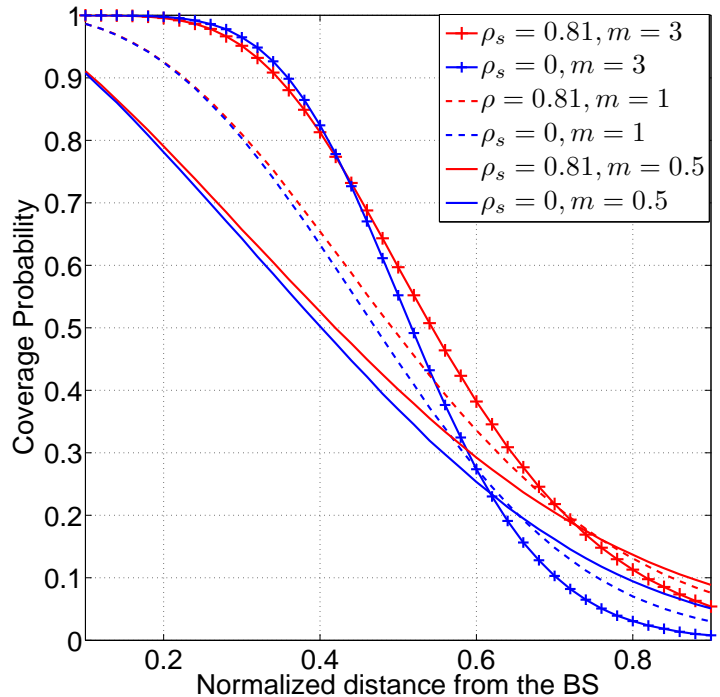}
 \caption{ Coverage probability plot for different value of user's shape parameter, i.e., $m$. Here $ m_c=1,\alpha=2.5, T=3$dB}
 \label{fig:fig1}
\end{figure}

            \begin{figure}[ht]
            \centering
            \includegraphics[scale=0.4]{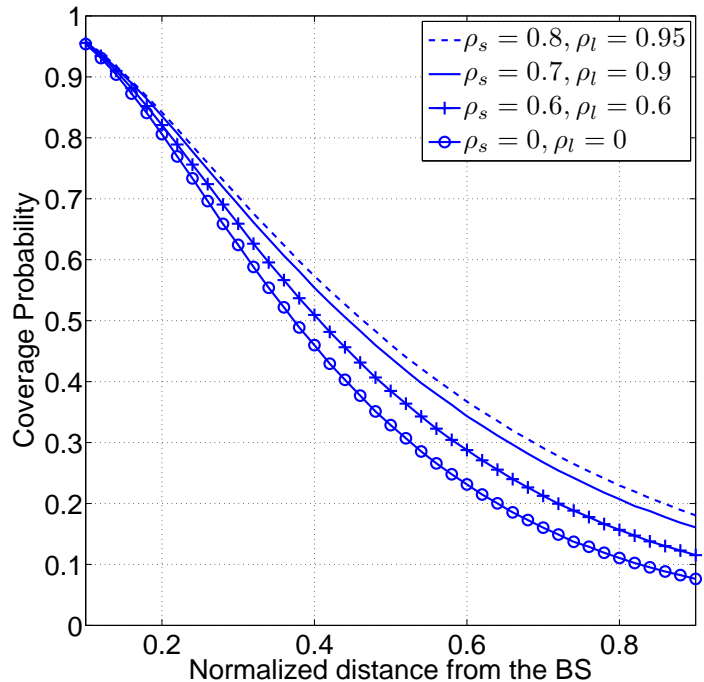}
\caption{ Coverage probability plot for different value of correlation coefficients. Here $m=1,\sigma_{dB}=10dB, \mu_c=1,\eta_i=2,\alpha=2.5, T=3$dB}
             \label{fig:correlation}
             \end{figure}
            \begin{figure}[ht]
            \centering
            \includegraphics[scale=0.42]{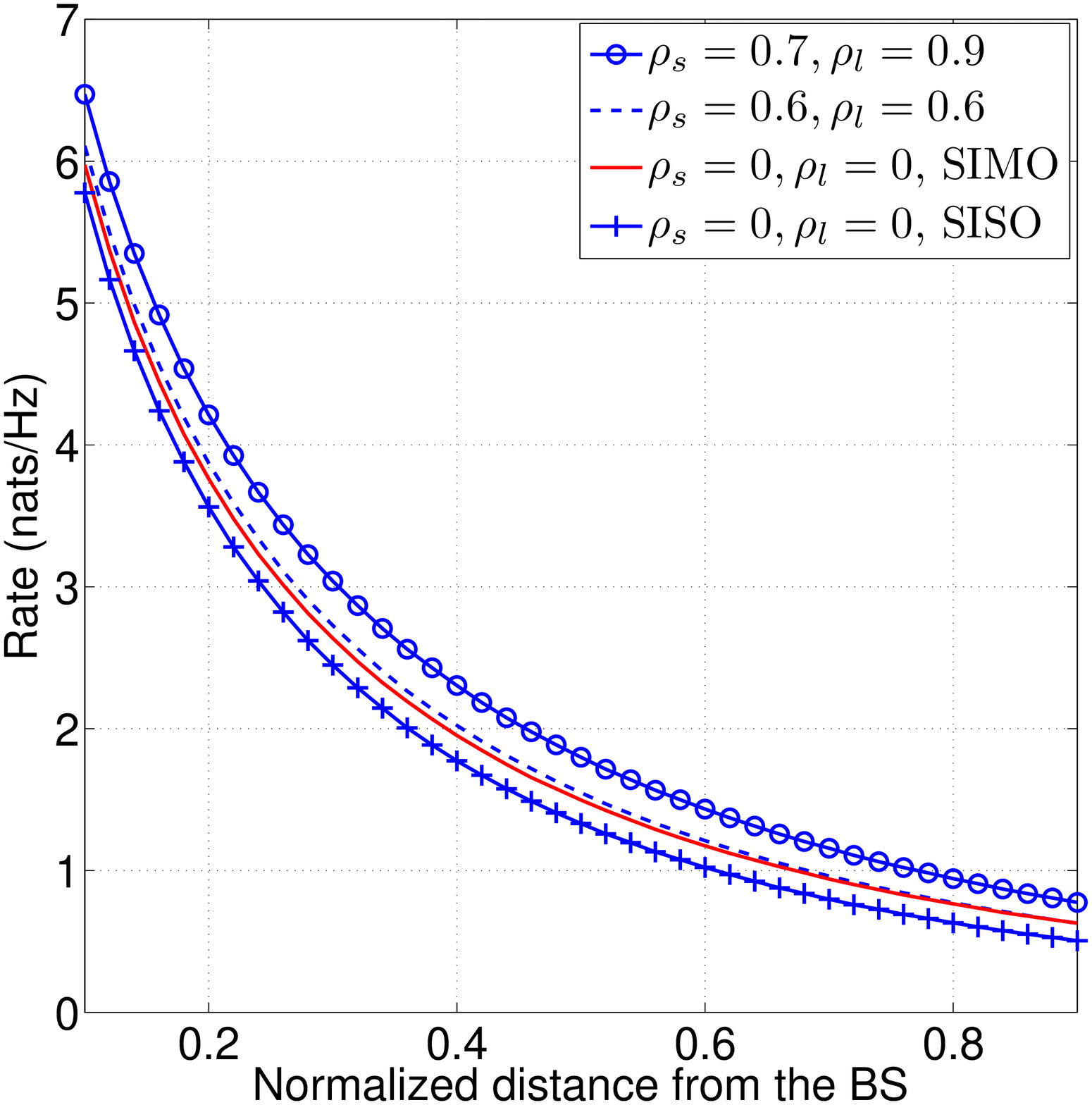}
\caption{rate plot for different value of correlation coefficients. Here $m=1,\sigma_{dB}=10dB, m_i=1,\alpha=2.5$}
             \label{fig:rate}
             \end{figure}
\subsection{How the User can Exploit Correlation among Interferers}
We will now briefly discuss how the user in a cellular network can exploit knowledge of positive correlation among its interferers.
We  compare the coverage probability in the presence of correlated interferers for single input single output (SISO) network with the coverage probability in the presence of independent interferers for single input multiple output (SIMO) network to show that the impact of correlation is significant.  For the SIMO network, it is assumed that each user is equipped with $2$ antennas and both antennas at the user are used for reception since downlink is considered. A linear minimum mean-square-error (LMMSE) receiver \cite{tse2005fundamentals} is considered. In order to calculate rate with a LMMSE receiver, it is assumed that the closest interferer can be completely  cancelled  at the SIMO receiver. Fig. \ref{fig:rate} plots the SISO rate in the presence of independent and correlated interferers case and the rate in the presence of independent interferers for a SIMO network. It can be seen that for $\rho_s=0.6, \rho_l=0.6$, the SISO rate for the correlated case\footnote{The  correlation among the interferers is defined by the correlation matrix in \eqref{correlation}  with  $\rho_{pq}=\rho^{|p-q|}$ where $p,q=1,\cdots ,N$}  is slightly higher than the SIMO rate for independent case. However, for $\rho_s=0.7,\rho_l=0.9$, SISO rate with correlated interferers is significantly higher than the SIMO rate with independent interferers.   In other words, correlation among the interferers seems  to be as good as having one additional antenna at the receiver capable of cancelling the dominant interferer. Obviously, if one had correlated interferers in the SIMO system that would again lead to improved coverage probability and rate and may be compared to a SIMO system with higher number of antennas. In all three cases, it is apparent that if the correlation among the interferers is exploited, it leads to  performance results for a SISO system which are comparable to the   performance of a $1\times 2$ SIMO system with independent interferers.

We have now consider a MU-MIMO system and show that the impact of correlation on MU-MIMO is significant. It is assumed that each user and BS  are equipped with $2$ receive antennas and $2$ transmit antennas, respectively. It is also assume that both transmit antennas at the BS are utilized to transmit $2$ independent data streams to its own $2$ users. A LMMSE receiver is considered and assume that user can cancel the closest interferer. Hence, in this MU-MIMO system, the user will experience no intra-tier interference coming from the serving BS.  Fig. \ref{fig:rate_mumimo} plots the MU-MIMO rate in the presence of independent and correlated interferers case. It can be seen that there is significant gain in rate for correlated case. For example, at normalized distance $0.6$ from the BS, rate increases from $1.24$ nats/Hz in the independent case to $2.55$ nats/Hz in the correlated case when $\rho_s=0.8$ and $\rho_l=0.95$ and $1.985$ nats/Hz in the correlated case when $\rho_s=0.6$ and $\rho_l=0.6$. The reason for significant gain in rate is due to the fact that interferers becomes double for $2\times 2$ MU-MIMO system.

            \begin{figure}[ht]
            \centering
            \includegraphics[scale=0.42]{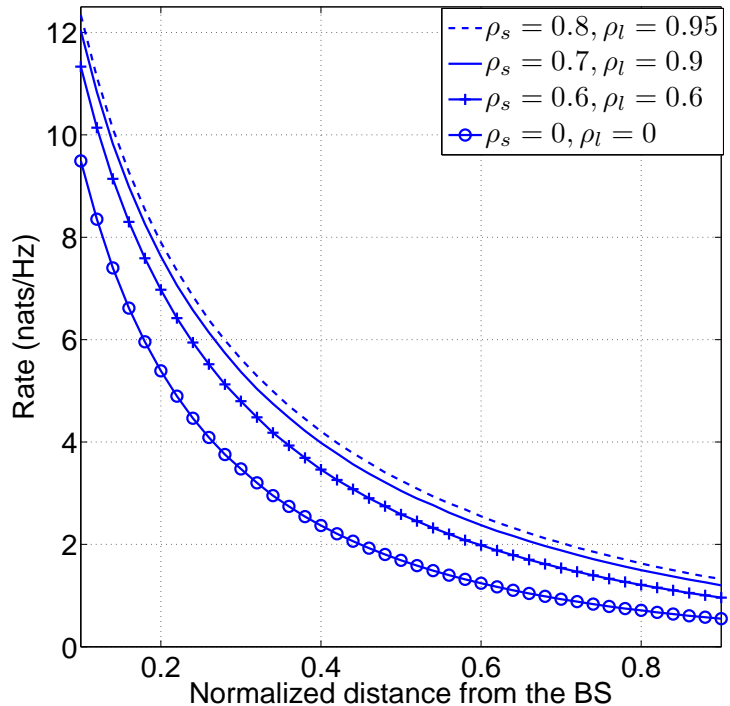}
\caption{rate of MU-MIMO for different value of correlation coefficients. Here $m=1,\sigma_{dB}=10dB, m_i=1,\alpha=2.5$}
             \label{fig:rate_mumimo}
             \end{figure}

We would also likely to briefly point out that the impact of correlation among the interferers is like that of introducing interference alignment in a system. Interference alignment actually aligns interference using appropriate precoding so as reduce the number of interferers one needs to cancel. Here the physical nature of the wireless channel and the presence of co-located interferers also ``aligns'' the interferer partially. This is the reason one can get a gain equivalent to $1\times 2$ system in a $1\times 1$ system with correlated interferers provided the user knows about the correlation. 

Summarizing, our work is able to analytically shows the impact of correlated interferers on coverage probability and rate. This can be used by the network and user to decide whether one wants to use the antennas at the receiver for diversity gain or interference cancellation depending on the information available about interferers correlation. Note that interferers from adjacent sector of a BS will definitely be correlated \cite{correlation,1105, 82767, 104090, 490700, 944855}. We believe that this correlation should be exploited, since the analysis shows that knowledge of correlation will lead to higher coverage probability and rate.  

\section{Conclusions}
In this work, the coverage probability  have been compared analytically for following two cases: $(a)$  User channel experiences Nakagami-m fading and interferers experience $\eta$-$\mu$ fading. $(b)$ Interferers being correlated where the correlation is specified by a correlation matrix. We have  shown that the coverage probability in correlated interferer case is higher than that of the independent case, when the user channel's shape parameter is lesser than or equal to one, and the interferers have Nakagami-m fading with arbitrary parameters. Further, rate have been compared when both user channel and interferers experience $\eta$-$\mu$ fading. It has been shown that positive correlation among the interferers always increases the rate.  We have also taken into account the large scale fading component in our analysis. The impact of correlation seems even more pronounced in the presence of  shadow fading. Moreover, MU-MIMO system is considered and it has been shown that the impact of correlation among interferers is significant on MU-MIMO. Our results indicate that if the user is aware of the interferers correlation matrix then it can exploit it since the correlated interferers behave like partially aligned interferers. This means that if the user is aware of the correlation then one may be able to obtain a rate equivalent to a $1\times 2$ system in a $1\times 1$ system depending on the correlation matrix structure. Extensive simulations  were performed and these match with the theoretical results.

\appendix
\label{app}
\section*{Proof of Theorem \ref{theorem_shape}}
The coverage probability expressions for the scenario when interferers are independent and the scenario when interferers are correlated  are given as follows (given in \eqref{eq:cov9} and \eqref{eq:cov8}, respectively). 
\begin{equation}
C_p(r)=K F_D^{(N)}\left[1-m,m_c,\cdots,m_{c}; Nm_c+1;\frac{1}{Tr^{\alpha}  \lambda_1m+1},\cdots,\frac{1}{Tr^{\alpha}\lambda_N m+1}\right] \label{compare}
\end{equation}
\begin{equation}
\hat{C}_p(r)= \hat{K} F_D^{(N)}\left[1-m,m_c,\cdots,m_{c}; Nm_c+1;\frac{1}{Tr^{\alpha}m\hat{\lambda}_1+1},\cdots,\frac{1}{Tr^{\alpha}m\hat{\lambda}_N+1}\right] \label{compare1}
\end{equation}
where $K=\frac{\Gamma(Nm_c+m)}{\Gamma\left(Nm_c+1\right)}\frac{1}{\Gamma{(m)}} \prod\limits_{i=1}^{N}\left(\frac{1}{Tr^{\alpha}m \lambda_i +1}\right)^{m_c}$ and $\hat{K}=\frac{\Gamma(Nm_c+m)}{\Gamma\left(Nm_c+1\right)}\frac{1}{\Gamma{(m)}} \prod\limits_{i=1}^{N}\left(\frac{1}{Tr^{\alpha}m\hat{\lambda}_i +1}\right)^{m_c}$.  From Theorem \ref{main_thoerem} it is clear that $\hat{K}> K$ . Now, we need to compare the Lauricella's function of the fourth kind of \eqref{compare} and \eqref{compare1}. Here, for comparison we use the series expression for $F_D(.)$.
We expand the  series expression for the Lauricella's function of the fourth kind in the  following form:
\begin{align}
F_D^{(N)}[a,b,\cdots, b;c;x_1,\cdots, x_N]=&1+K_{1,1}\sum\limits_{i=1}^{N}x_i+K_{2,1}\sum\limits_{i=1}^{N}x_i^2+K_{2,2}\sum\limits_{1\leq i<j\leq N}x_ix_j+K_{3,1}\sum\limits_{i=1}^{N}x_i^3\nonumber\\
&+K_{3,2}\sum\limits_{i,j=1, s.t.i\neq j}^{N}x_i^2x_j+K_{3,3}\sum\limits_{1\leq i<j<k\leq N}x_ix_jx_k+\cdots\label{lauricella}
\end{align}
 where
$K_{1,1}=\frac{(a)_1(b)_1}{(c)_1 1!}$, $K_{2,1}=\frac{(a)_2(b)_2}{(c)_2 2!}$, $K_{2,2}=\frac{(a)_2(b)_1(b)_1}{(c)_21!1!}$, $K_{3,1}=\frac{(a)_3(b)_3}{(c)_3 3!}$, $K_{3,2}=\frac{(a)_3(b)_2(b)_1}{(c)_3 2!1!}$, $K_{3,3}=\frac{(a)_3(b)_1(b)_1(b)_1}{(c)_31!1!1!}$ and so on.

Hence the coverage probability for independent case given in \eqref{compare} can be written as
\begin{equation*}
\textstyle C_p(r)=K\bigg[ 1+K_{1,1}\sum\limits_{i=1}^{N}\left(\frac{1}{Tr^{\alpha}m \lambda_i +1}\right)+K_{2,1}\sum\limits_{i=1}^{N}\left(\frac{1}{Tr^{\alpha}m \lambda_i +1}\right)^2+ K_{3,1}\sum\limits_{i=1}^{N}\left(\frac{1}{Tr^{\alpha}m \lambda_i +1}\right)^3+
\end{equation*}
\begin{equation*}
\textstyle
K_{2,2}\sum\limits_{1\leq i<j\leq N}\left(\frac{1}{Tr^{\alpha}m \lambda_i +1}\right)\left(\frac{1}{Tr^{\alpha}m \lambda_j +1}\right)+
K_{3,2}\sum\limits_{i,j=1, s.t. i\neq j}^{N}\left(\frac{1}{Tr^{\alpha}m \lambda_i +1}\right)^2\left(\frac{1}{Tr^{\alpha}m \lambda_j +1}\right)+ 
\end{equation*}
\begin{equation}
\textstyle
K_{3,3}\sum\limits_{1\leq i<j<k\leq N}\left(\frac{1}{Tr^{\alpha}m \lambda_i +1}\right)\left(\frac{1}{Tr^{\alpha}m \lambda_j +1}\right)\left(\frac{1}{Tr^{\alpha}m \lambda_k +1}\right)+\cdots\bigg]\label{sum1}
\end{equation}
Similarly,  for the correlated  case  the coverage probability given in \eqref{compare1} can be written as
\begin{equation*}
\textstyle
\hat{C}_p(r)=\hat{K}\bigg[ 1+K_{1,1}\sum\limits_{i=1}^{N}\left(\frac{1}{Tr^{\alpha}m\hat{\lambda}_i +1}\right)+K_{2,1}\sum\limits_{i=1}^{N}\left(\frac{1}{Tr^{\alpha}m\hat{\lambda}_i +1}\right)^2+ K_{3,1}\sum\limits_{i=1}^{N}\left(\frac{1}{Tr^{\alpha}m\hat{\lambda}_i +1}\right)^3+
\end{equation*}
\begin{equation*}
\textstyle
K_{2,2}\sum\limits_{1\leq i<j\leq N}\left(\frac{1}{Tr^{\alpha}m\hat{\lambda}_i +1}\right)\left(\frac{1}{Tr^{\alpha}m\hat{\lambda}_j +1}\right)+
K_{3,2}\sum\limits_{i,j=1, s.t. i\neq j}^{N}\left(\frac{1}{Tr^{\alpha}m\hat{\lambda}_i +1}\right)^2\left(\frac{1}{Tr^{\alpha}m\hat{\lambda}_j +1}\right)+
\end{equation*}
\begin{equation}
\textstyle
K_{3,3}\sum\limits_{1\leq i<j<k\leq N}\left(\frac{1}{Tr^{\alpha}m\hat{\lambda}_i +1}\right)\left(\frac{1}{Tr^{\alpha}m\hat{\lambda}_j +1}\right)\left(\frac{1}{Tr^{\alpha}m\hat{\lambda}_k +1}\right)+\cdots\bigg]\label{sum2}
\end{equation}
Here 
$K_{1,1}=\frac{(1-m)_1(m_c)_1}{(N m_c+1)_1 1!}$, $K_{2,1}=\frac{(1-m)_2(m_c)_2}{(N m_c+1)_2 2!}$, $K_{2,2}=\frac{(1-m)_2(m_c)_1(m_c)_1}{(N m_c+1)_2 1!1!}$, $K_{3,1}=\frac{(1-m)_3(m_c)_3}{(N m_c+1)_3 3!}$, $K_{3,2}=\frac{(1-m)_3(m_c)_2(m_c)_1}{(N m_c+1)_3 2!1!}$, $K_{3,3}=\frac{(1-m)_3(m_c)_1(m_c)_1(m_c)_1}{(N m_c+1)_31!1!1!}$ and so on. Note that here $K_{i,j}$ are the same for both $C_p(r)$ and $\hat{C}_p(r)$. Now, we want to show that each summation term in the series expression is a Schur-convex function.

Each summation term in the series expression is symmetrical due to the fact that any two of its argument can be interchanged without changing the value of the function. We have already shown that $\prod\limits_{i=1}^{N}\left(\frac{1}{1+kx_i}\right)^{a} $ is a convex function $\forall x_i\geq 0$ and $\forall a>0$. Now, the terms in the summation terms in \eqref{sum1} and \eqref{sum2} are of the form $\prod\limits_{i=1}^{M}\left(\frac{1}{1+kx_i}\right)^{a_i} $ where $M\leq N$.  Using Theorem \ref{schur_convex} (given in Section II), one can show that each  term of each summation term is a convex function. Using the fact that convexity is preserved under summation one can show that each summation term is a convex function. Thus, each summation term 
in 
series expression is a Schur-convex function.

Now we consider following two cases.

Case I when $m< 1$:
Since $m< 1$, so $1-m > 0$ and hence all the constant $K_{i, j}> 0 \text{ } \forall\text{ } i, j$.  Each summation term in series expression of coverage probability for correlated case is greater  than or equal to the corresponding summation term in the series expression of coverage probability for independent case. Thus, if user  channel's shape parameter $m< 1$ then coverage probability of  correlated case is  greater than or equal to  the coverage probability for independent case.

Case II when $m> 1$:
Since $m > 1$, then $1-m < 0$ and hence $K_{i,j}<0 \text{ }\forall i \in 2|\mathbb{Z}|+1 \text { and } \forall j $  where set $\mathbb{Z}$ denote the integer number, due to  the fact that $(a)_N<0 \text{ if } a<0 \text{ and } N \in 2|\mathbb{Z}|+1 $.  Whereas, $K_{i,j}>0 \text{ }\forall i \in 2|\mathbb{Z}| \text { and } \forall j $ due of the fact that $(a)_N>0 \text{ if } a<0 \text{ and } N\in 2|\mathbb{Z}|$. Thus, if   $m>1$, we cannot state whether the coverage probability of one case is greater than or lower than the other case.

\bibliographystyle{IEEEtran}
\bibliography{bibfile}

\begin{thebibliography}{10}
\providecommand{\url}[1]{#1}
\csname url@samestyle\endcsname
\providecommand{\newblock}{\relax}
\providecommand{\bibinfo}[2]{#2}
\providecommand{\BIBentrySTDinterwordspacing}{\spaceskip=0pt\relax}
\providecommand{\BIBentryALTinterwordstretchfactor}{4}
\providecommand{\BIBentryALTinterwordspacing}{\spaceskip=\fontdimen2\font plus
\BIBentryALTinterwordstretchfactor\fontdimen3\font minus
  \fontdimen4\font\relax}
\providecommand{\BIBforeignlanguage}[2]{{%
\expandafter\ifx\csname l@#1\endcsname\relax
\typeout{** WARNING: IEEEtran.bst: No hyphenation pattern has been}%
\typeout{** loaded for the language `#1'. Using the pattern for}%
\typeout{** the default language instead.}%
\else
\language=\csname l@#1\endcsname
\fi
#2}}
\providecommand{\BIBdecl}{\relax}
\BIBdecl

\bibitem{correlation}
V.~Graziano, ``{Propagation Correlations at 900 MHz},'' \emph{IEEE Transactions
  on Vehicular Technology}, vol.~27, no.~4, pp. 182--189, 1978.

\bibitem{1105}
H.~Arnold, D.~Cox, and R.~Murray, ``{Macroscopic diversity performance measured
  in the 800-MHz portable radio communications environment},'' \emph{IEEE
  Transactions on Antennas and Propagation}, vol.~36, no.~2, pp. 277--281, Feb
  1988.

\bibitem{82767}
G.~Hess and S.~Gilbert, ``{Shadowing variability in an urban land mobile
  environment at 900 MHz},'' \emph{Electronics Letters}, vol.~26, no.~10, pp.
  646--648, May 1990.

\bibitem{104090}
M.~Gudmundson, ``Correlation model for shadow fading in mobile radio systems,''
  \emph{Electronics Letters}, vol.~27, no.~23, pp. 2145--2146, Nov 1991.

\bibitem{490700}
R.~Stevens and I.~Dilworth, ``{Mobile radio shadowing loss variability and
  co-channel signal correlation at 452 MHz},'' \emph{Electronics Letters},
  vol.~32, no.~1, pp. 16--17, Jan 1996.

\bibitem{944855}
E.~Perahia, D.~Cox, and S.~Ho, ``{Shadow Fading Cross Correlation Between
  Basestations},'' in \emph{IEEE VTS 53rd Vehicular Technology Conference,
  2001. VTC 2001 Spring.}, vol.~1, 2001, pp. 313--317 vol.1.

\bibitem{3gpp}
3GPP, ``{Further Advancements for E-UTRA Physical Layer Aspects (release 9)},''
  \emph{3GPP TR 36.814 V9.0.0 (2010-03)}, 2010.

\bibitem{5590312}
S.~Szyszkowicz, H.~Yanikomeroglu, and J.~Thompson, ``{On the Feasibility of
  Wireless Shadowing Correlation Models},'' \emph{IEEE Transactions on
  Vehicular Technology}, vol.~59, no.~9, pp. 4222--4236, 2010.

\bibitem{991146}
F.~Graziosi and F.~Santucci, ``{A General Correlation Model for Shadow Fading
  in Mobile Radio Systems},'' \emph{IEEE Communications Letters}, vol.~6,
  no.~3, pp. 102--104, 2002.

\bibitem{6171806}
D.~Morales-Jimenez, J.~Paris, and A.~Lozano, ``{Outage Probability Analysis for
  MRC in \(\eta\)-\(\mu \) Fading Channels with Co-Channel Interference},''
  \emph{IEEE Communications Letters}, vol.~16, no.~5, pp. 674--677, May 2012.

\bibitem{paris2013outage}
J.~F. Paris, ``{Outage Probability in $\eta$-$\mu$/$\eta$-$\mu$ and
  $\kappa$-$\mu$/$\eta$-$\mu$ Interference-Limited Scenarios},'' \emph{IEEE
  Transactions on Communications}, vol.~61, no.~1, pp. 335--343, 2013.

\bibitem{6661325}
N.~Ermolova and O.~Tirkkonen, ``{Outage Probability Analysis in Generalized
  Fading Channels with Co-Channel Interference and Background Noise:
  $\eta$-$\mu$/$\eta$-$\mu$, $\eta$-$\mu$/$\kappa$-$\mu$, and
  $\kappa$-$\mu$/$\eta$-$\mu$; Scenarios},'' \emph{IEEE Transactions on
  Wireless Communications}, vol.~13, no.~1, pp. 291--297, January 2014.

\bibitem{6502746}
K.~P. Peppas, ``{Dual-Hop Relaying Communications with Cochannel Interference
  Over $\eta$-$\mu$ Fading Channels},'' \emph{IEEE Trans. on Veh. Technol.},
  vol.~62, no.~8, pp. 4110--4116, Oct 2013.

\bibitem{6957529}
S.~Kumar, G.~Chandrasekaran, and S.~Kalyani, ``Analysis of outage probability
  and capacity for $\kappa$ - $\mu/\eta$ - $\mu$ faded channel,'' \emph{IEEE
  Communications Letters}, vol.~19, no.~2, pp. 211--214, Feb 2015.

\bibitem{7130668}
S.~Kumar and S.~Kalyani, ``Coverage probability and rate for $\kappa
  -\mu/\eta-\mu $ fading channels in interference-limited scenarios,''
  \emph{IEEE Transactions on Wireless Communications,}, vol.~14, no.~11, pp.
  6082--6096, Nov 2015.

\bibitem{7015633}
------, ``Impact of correlated interferers on coverage and rate of ffr and sfr
  schemes,'' \emph{IEEE Transactions on Vehicular Technology}, vol.~65, no.~1,
  pp. 434--440, Jan 2016.

\bibitem{7173002}
S.~Menon and S.~Kalyani, ``{SER for Optimal Combining in the Presence of
  Multiple Correlated Co-Channel Interferers},'' \emph{IEEE Communications
  Letters}, vol.~19, no.~11, pp. 2033--2036, Nov 2015.

\bibitem{259659}
J.~Abrahams and M.~J. Lipman, ``Relative uniformity of sources and the
  comparison of optimal code costs,'' \emph{IEEE Transactions on Information
  Theory,}, vol.~39, no.~5, pp. 1695--1697, Sep 1993.

\bibitem{992785}
F.~Cicalese and U.~Vaccaro, ``Supermodularity and subadditivity properties of
  the entropy on the majorization lattice,'' \emph{IEEE Transactions on
  Information Theory}, vol.~48, no.~4, pp. 933--938, Apr 2002.

\bibitem{1278662}
------, ``Bounding the average length of optimal source codes via majorization
  theory,'' \emph{IEEE Transactions on Information Theory}, vol.~50, no.~4, pp.
  633--637, April 2004.

\bibitem{1056173}
H.~Witsenhausen, ``Some aspects of convexity useful in information theory,''
  \emph{IEEE Transactions on Information Theory}, vol.~26, no.~3, pp. 265--271,
  May 1980.

\bibitem{1661855}
F.~Cicalese, L.~Gargano, and U.~Vaccaro, ``A note on approximation of uniform
  distributions from variable-to-fixed length codes,'' \emph{IEEE Transactions
  on Information Theory}, vol.~52, no.~8, pp. 3772--3777, Aug 2006.

\bibitem{6409457}
T.~Samarasinghe, H.~Inaltekin, and J.~Evans, ``Optimal selective feedback
  policies for opportunistic beamforming,'' \emph{IEEE Transactions on
  Information Theory}, vol.~59, no.~5, pp. 2897--2913, May 2013.

\bibitem{6042309}
C.~Tepedelenlioglu, A.~Rajan, and Y.~Zhang, ``Applications of stochastic
  ordering to wireless communications,'' \emph{IEEE Transactions on Wireless
  Communications}, vol.~10, no.~12, pp. 4249--4257, December 2011.

\bibitem{6423920}
A.~Rajan and C.~Tepedelenlioglu, ``A representation for the symbol error rate
  using completely monotone functions,'' \emph{IEEE Transactions on Information
  Theory}, vol.~59, no.~6, pp. 3922--3931, June 2013.

\bibitem{7031955}
------, ``Stochastic ordering of fading channels through the shannon
  transform,'' \emph{IEEE Transactions on Information Theory}, vol.~61, no.~4,
  pp. 1619--1628, April 2015.

\bibitem{6662523}
B.~Dulek, ``Comments on "a representation for the symbol error rate using
  completely monotone functions",'' \emph{IEEE Transactions on Information
  Theory}, vol.~60, no.~2, pp. 1367--1370, Feb 2014.

\bibitem{403769}
{V. A. Aalo}, ``{Performance of Maximal-Ratio Diversity Systems in a Correlated
  Nakagami-Fading Environment},'' \emph{{IEEE Transactions on Communications}},
  vol.~43, no.~8, pp. 2360--2369, Aug 1995.

\bibitem{marshall2011inequalities}
A.~W. Marshall, I.~Olkin, and B.~Arnold, \emph{Inequalities: Theory of
  Majorization and Its Applications}.\hskip 1em plus 0.5em minus 0.4em\relax
  Springer, 2011.

\bibitem{shaked2007stochastic}
\BIBentryALTinterwordspacing
M.~Shaked and J.~Shanthikumar, \emph{Stochastic Orders}, ser. Springer Series
  in Statistics.\hskip 1em plus 0.5em minus 0.4em\relax Physica-Verlag, 2007.
  [Online]. Available: \url{http://books.google.co.in/books?id=rPiToBK2rwwC}
\BIBentrySTDinterwordspacing

\bibitem{kaas2001modern}
R.~Kaas, M.~Goovaerts, J.~Dhaene, and M.~Denuit, \emph{{Modern Actuarial Risk
  Theory}}.\hskip 1em plus 0.5em minus 0.4em\relax Springer, 2001, vol. 328.

\bibitem{4231253}
M.~Yacoub, ``{The $\kappa$-$\mu$ Distribution and the $\eta$-$\mu$
  Distribution},'' \emph{IEEE Antennas and Propagation Magazine}, vol.~49,
  no.~1, pp. 68--81, Feb 2007.

\bibitem{exton1976multiple}
H.~Exton, \emph{{Multiple Hypergeometric Functions and Applications}}, ser.
  Ellis Horwood series in mathematics and its applications.\hskip 1em plus
  0.5em minus 0.4em\relax Ellis Horwood, 1976.

\bibitem{5288931}
K.~Peppas, F.~Lazarakis, A.~Alexandridis, and K.~Dangakis, ``{Error Performance
  of Digital Modulation Schemes with MRC Diversity Reception Over $\eta$-$\mu$
  Fading Channels},'' \emph{IEEE Trans. on Wireless Commun.}, vol.~8, no.~10,
  pp. 4974--4980, October 2009.

\bibitem{exch}
{S.L. Lauritzen}, ``{Exchangeable Rasch Matrices},'' \emph{{Rendiconti di
  Maternatica, Serie VII, 28, Roma, 83-95.}}, 2008.

\bibitem{3gpp1}
``{Coordinated Multipoint Opreation for LTE Physical Layer Aspects (release 11)
  },'' \emph{{3GPP TR 36.819 V11.0.0 (2011-09)}}, 2011.

\bibitem{generalized}
D.~Lewinski, ``{Nonstationary Probabilistic Target and Clutter Scattering
  Models},'' \emph{IEEE Transactions on Antennas and Propagation,}, vol.~31,
  no.~3, pp. 490--498, 1983.

\bibitem{moment}
S.~Al-Ahmadi and H.~Yanikomeroglu, ``{On the Approximation of the Generalized K
  Distribution by a Gamma Distribution for Modeling Composite Fading
  Channels},'' \emph{IEEE Transactions on Wireless Communications}, vol.~9,
  no.~2, pp. 706--713, 2010.

\bibitem{5501945}
------, ``{On the Statistics of the Sum of Correlated Generalized-K RVs},'' in
  \emph{2010 IEEE International Conference on Communications (ICC)}, 2010, pp.
  1--5.

\bibitem{reig2008performance}
J.~Reig, ``{Performance of Maximal Ratio Combiners Over Correlated Nakagami-m
  Fading Channels with Arbitrary Fading Parameters},'' \emph{IEEE Transactions
  on Wireless Communications}, vol.~7, no.~5, pp. 1441--1444, 2008.

\bibitem{tse2005fundamentals}
D.~Tse and P.~Viswanath, \emph{Fundamentals of wireless communication}.\hskip
  1em plus 0.5em minus 0.4em\relax Cambridge university press, 2005.

\end{thebibliography}

\end{document}